\title{Annotated Hypergraphs: Models and Applications}
\date{\today}
\definecolor{comment_purple}{HTML}{dec5ed}
\crefname{thm}{Theorem}{Theorems}
\crefname{lm}{Lemma}{Lemmas}
\crefname{clm}{Claim}{Claims}
\crefname{cor}{Corollary}{Corollaries}
\crefname{conj}{Conjecture}{Conjectures}
\crefname{appsec}{Appendix}{Appendices}
\crefname{dfn}{Definition}{Definitions}
\crefname{prop}{Proposition}{Propositions}
\def\blfootnote{\gdef\@thefnmark{}\@footnotetext}
\begin{document}

\author{Philip Chodrow$^{\ast,\dagger}$}
\address{
         Operations Research Center \\ 
         Massachusetts Institute of Technology \\ 
         77 Massachusetts Avenue, Cambridge MA 02139, USA}
\email[Corresponding author]{pchodrow@mit.edu}

\author{Andrew Mellor$^\ast$}
\address{Mathematical Institute  \\
         University of Oxford \\
         Woodstock Road, Oxford, OX2 6GG, UK
        }
\email{mellor@maths.ox.ac.uk}

\maketitle


\begin{abstract}
    Hypergraphs offer a natural modeling language for studying polyadic interactions between sets of entities. 
    Many polyadic interactions are asymmetric, with nodes playing distinctive roles. 
    In an academic collaboration network, for example, the order of authors on a paper often reflects the nature of their contributions to the completed work. 
    To model these networks, we introduce \emph{annotated hypergraphs} as natural polyadic generalizations of directed graphs. 
    Annotated hypergraphs form a highly general framework for incorporating metadata into polyadic graph models.  
    To facilitate data analysis with annotated hypergraphs, we construct a role-aware configuration null model for these structures and prove an efficient Markov Chain Monte Carlo scheme for sampling from it. 
    We proceed to formulate several metrics and algorithms for the analysis of annotated hypergraphs. 
    Several of these, such as assortativity and modularity, naturally generalize dyadic counterparts. 
    Other metrics, such as local role densities, are unique to the setting of annotated hypergraphs. 
    We illustrate our techniques on six digital social networks, and present a detailed case-study of the Enron email data set. 
    
    \smallskip
    \noindent \keywords{\textbf{Keywords.} hypergraphs, null models, data science, statistical inference, community detection}
\end{abstract}

\section{Introduction}                     
    \blfootnote{$^\dagger$ Corresponding author}
    \blfootnote{$^\ast$ PC and AM contributed equally to this paper.}
    Many data sets of contemporary interest log interactions between sets of entities of varying size. 
    In collaborations between scholars, legislators, or actors, a single project may involve an arbitrary number of agents. 
    A single email links at least one sender to one or more receivers. 
    A given chemical reaction may require a large set of reagents. 
    Networks such as these cannot be represented via the classical paradigm of dyadic graphs without loss of higher-order information. 
    For this reason, recent scholarly attention has emphasized the role of such polyadic interactions in governing the structure and function of complex systems \cite{Benson2018,greening2015higher,de2019social}.  
    Polyadic data representations such as hypergraphs \cite{berge1984hypergraphs,chodrow2019configuration} and simplicial complexes \cite{Young2017,carlsson2009topology} have emerged as practical modeling frameworks that directly represent interactions between arbitrary sets of agents. 

    In some cases, even polyadic data representations may be inadequate. 
    Increasingly, network data sets incorporate rich metadata over and above topological structure. 
    Models that flexibly incorporate this information can assist analysts in discovering features that may not be apparent without metadata. 
    In this article, we consider an important and relatively general class of metadata in which nodes are assigned \emph{roles} in each edge. 
    A variety of social data sets involve such roles. 
    For example, research articles have junior and senior authors.
    Political bills have sponsors and supporters. 
    Movies have starring and supporting actors. 
    Emails have senders, receivers, and carbon copies. 
    Chemical reactions possess reactants, solvents, catalysts and inhibitors.
    These roles induce asymmetries in edges, and permuting role labels within an edge results in a meaningfully different data set. 
    For example, a movie in which actor $A$ plays a starring role and $B$ a supporting role becomes a different movie if the roles are exchanged. 
    
    Metadata, including roles, can be especially important for modeling processes evolving on network substrates. 
    A trivial example is that information cannot flow along an email edge from receiver to sender. 
    A less trivial example comes from a recent study \cite{rotabi2017tracing} which found that conventions in scholarly documentation preparation tend to flow along collaborations from more senior authors to more junior ones. 
    In many fields, senior authors will tend to be ``last'' authors, while junior ones are more likely to be ``first'' or ``middle'' authors. 
    The author order therefore carries important information about the spread of conventions along this collaboration network. 

    There exists an extensive literature studying graphs and hypergraphs with metadata attached to nodes  \cite{ghoshal2009random, mcmorris1994triangulating,kovanen2013temporal, henderson2012rolx,peel2017ground} and edges \cite{mucha2010community,gomez2013diffusion,battiston2014structural}.
    The problem of studying hypergraphs with general roles, however, does not neatly fit into any of these frameworks. 
    This is because roles are not attributes of either nodes or edges, but rather of node-edge pairs. 
    An actress is not (intrinsically) a ``lead actress'' -- she may play a leading role in one film and a supporting role in the next. 
    Contextual metadata is familiar in the context of directed networks. 
    Each edge contains two nodes, one of which possesses the role ``source'' and the other ``target,'' however a node may be the source of some edges, and the target of others.
    In  \cite{gallo1993directed,gallo1998directed}, the authors allow edges to contain arbitrary numbers of nodes, each of which is assigned one of these two roles. 
    This results in \emph{directed hypergraphs}, which have found some application in  the study of cellular networks \cite{klamt2009hypergraphs} and routing \cite{marcotte1998hyperpath} problems.
    Subsequent work generalized further to ``multimodal networks'' by introducing a relationship of ``association'' \cite{heath2009multimodal} alongside the source and target roles.

    Many of these developments have a somewhat \emph{ad hoc} character, motivated by important modeling needs within specific application domains. 
    Our aim in this work is to develop a unified modeling and analysis framework for polyadic data with contextual roles, which can then be flexibly deployed in varied domains. 
    The article is structured as follows. 
    In \Cref{sec:definitions}, we define \emph{annotated hypergraphs}, which naturally generalize the notion of directedness to polyadic data. 
    We then define a configuration model for annotated hypergraphs, and prove a Markov Chain Monte Carlo algorithm for sampling from this model. 
    In \Cref{sec:observables}, we define a range of role-aware metrics for studying the structure of annotated hypergraphs. 
    Some of these are direct generalizations of familiar tools, including centrality, assortativity, and modularity. 
    Others, such as local role densities, are qualitatively novel.  
    In \Cref{sec:results}, we bring our methods to bear on a small collection of social network data sets, showing how the framework of annotated hypergraphs allows us to flexibly highlight interpretable features in the data. 
    Additionally, we conduct an extended case-study of the popular Enron email data set. 
    We conclude in \Cref{sec:discussion} with a discussion of our results and suggestions for future work in the modeling of rich, polyadic data. 
    

\section{Annotated Hypergraphs} \label{sec:definitions}

    \begin{dfn}[Annotated Hypergraph]
    \label{def:annotated_hypergraph}
    An \emph{annotated hypergraph} $\h = (\V, \E, \X, \ell)$ consists of: 
    \begin{enumerate}
        \item A node set $\V$.
        \item A labeled edge set $\E$, a multiset of subsets of $\V$. 
        In particular, multi-edges are permitted, but edges in which the same node appears twice are not.
        \item A finite label set $\X$. 
        \item A role labeling function $\ell:\{(v,e)\in \V \times \E | v \in e \} \rightarrow \X$ where $\X$ is a label alphabet. 
    \end{enumerate}
    \end{dfn}
    The statement $\ell(v, e) = x$ is to be read as ``node $v$ has role $x$ in edge $e$.'' 
    We emphasize that the labeling function is contextual.
    Roles are assigned neither to nodes or to edges, but rather to node-edge pairs.
    There are two representations of annotated hypergraphs that will be useful in our subsequent development.
    Let $n = \abs{\V}$, $m = \abs{\E}$, and $p = \abs{\X}$. 
    Let $\mathbb{H}$ refer to the set of all annotated hypergraphs with $n$ nodes, $m$ hyperedges, and label alphabet $\X$. 
    
    \begin{dfn}[Labeled Incidence Array]
        The \emph{labeled incidence array} $\mathbf{T} = \mathbf{T}(\mathcal{H})\in \{0,1\}^{n\times m \times p}$ of an annotated hypergraph is defined entrywise by 
        \begin{align}
            t_{vex} = \begin{cases}
                1 &\quad v \in e \text{ and } \ell(v,e) = x \\ 
                0 &\quad \text{otherwise.}
                           \end{cases}
        \end{align}
    \end{dfn}
    Note that the labeled incidence array is distinct from the tensorial representation of directed hypergraphs in \cite{xie2016spectral}.
    An alternative representation is especially useful in the design of algorithms. 
    \begin{dfn}[Annotated Bipartite Graph]
        The \emph{annotated bipartite graph} $\B(\h) = (\V', \E')$ consists of
        \begin{itemize}
            \item A node set $\V' = (\V, \E)$
            \item An edge set $\E'$, where an edge between $v$ and $e$ exists in $\E'$ iff $v\in e$ in $\E$. 
            Each edge is labeled by $\ell(v,e)$, and may therefore be written $(v,e;x)$. 
        \end{itemize}
        Let $\mathbb{B}$ be the set of annotated bipartite graphs. 
    \end{dfn}
    
    We generalize the notion of self-loops in graphs via the concept of \emph{degeneracy}. 
    \begin{dfn}[Degeneracy]
        An edge $e \in \E$ is \emph{role-degenerate} if the same node $v$ appears twice in $e$, with the same role. 
        Edge $e$ is simply \emph{degenerate} if the same node appears twice in $e$, possibly in different roles. 
        We call an annotated hypergraph $\h$ role-degenerate (resp. degenerate) if it contains any role-degenerate (resp. degenerate) edges. 
    \end{dfn}
    
    It is difficult to find a modeling justification for hypergraphs with role-degenerate edges, but degenerate edges may have modeling applications. 
    For example, in email networks, it may be useful to log instances in which a sender also copies themsleves into the recipients list. 
    In our experiments below, however, we work only on data with neither form of degeneracy.
    Throughout the remainder of the paper, we restrict $\mathbb{H}$ to the set of \emph{nondegenerate} annotated hypergraphs. 

\subsection{A Configuration Model for Annotated Hypergraphs}

    We now define a configuration null model on nondegenerate, annotated hypergraphs. 
    As has recently been  emphasized by \cite{Fosdick2018}, there are several distinct models that are often called ``configuration models,'' and care is  required in order to define and sample from them. 

    To do so, we define two sets of vectors that summarize the incidence array $\mathbf{T}$. 
    For each $x$, define the vector $\mathbf{d}^x \in \mathbb{Z}_+^{n}$ entrywise as 
    \begin{align}
    \label{eqn:node_degrees}
        d_v^x =  \sum_{e \in E} t_{vex}\;.
    \end{align}
    Similarly, define the vector $\mathbf{k}\in \mathbb{Z}_+^m$ entrywise as 
    \begin{align*}
        k_e^x =  \sum_{v \in V} t_{vex}\;.
    \end{align*}
    The vector $\mathbf{d}^x$ counts the number of times that each node plays role $x$ in $\h$, while the vector $\mathbf{k}^x$ counts the number of nodes with role $x$ in each edge.  
    Let $\mathbf{D}$ be the matrix whose $x$th column is $\mathbf{d}^x$, and $\mathbf{K}$ the matrix whose $x$th column is $\mathbf{k}^x$. 
    In a slight abuse of notation, we also regard $\mathbf{D}$ and $\mathbf{K}$ as functions of $\h$. 

    \begin{dfn}[Configuration Null Space]
    	The \emph{configuration null space} $\mathbb{C}_{\mathbf{D}, \mathbf{K}} \subset \mathbb{H}$ induced by degree-role matrix $\mathbf{D}$ and dimension-role matrix $\mathbf{K}$ is 
    	\begin{align*}
    		\mathbb{C}_{\mathbf{D}, \mathbf{K}} = \left\{\h \in \mathbb{H} \;: \; \mathbf{D}(\h) = \mathbf{D}\;,\;  \mathbf{K}(\h) = \mathbf{K}\right\}\;.
    	\end{align*}
    	If $\mathbb{C}_{\mathbf{D}, \mathbf{K}} \neq \emptyset$, we say that $\mathbf{D}$ and $\mathbf{K}$ are \emph{configurable}.  
    \end{dfn}
    Throughout the remainder of this paper, we assume that $\mathbf{D}$ and $\mathbf{K}$ are configurable. 
    Note that this is always the case when $\mathbf{D}$ and $\mathbf{K}$ are extracted from an empirical data set. 
    
    In $\mathbb{C}_{\mathbf{D}, \mathbf{K}}$, each node ``remembers'' how many times it played role each $x$ and each edge remembers how many nodes playing role $x$ were contained in it. 
    Summing over $x$, we see that nodes remember their degrees and edges their dimensions. 
    The null space $\mathbb{C}_{\mathbf{D}, \mathbf{K}}$ thus generalizes the hypergraph configuration null space of \cite{chodrow2019configuration}. 

    A natural approach to defining a null model is to define the uniform measure on $\mathbb{C}(\h_0)$. 
    Such a definition is attractive from a theoretical standpoint, since this measure is also the entropy-maximizing measure on $\mathbb{H}$ subject to the degree and dimension constraints. 
    However, methods for sampling from uniform models suffer from computational issues related to counting edge multiplicities and rejection probabilities, often resulting in slow sampling \cite{Fosdick2018,chodrow2019moments}. 
    We therefore instead follow the traditional path of \cite{Bollobas1980,Molloy1998}, and others in formulating a configuration model as the output of a stub-matching algorithm. 
    
    A \emph{stub} is an indexed pair $(v, x, i)$ of a node and a role; the index $i$ serves simply to distinguish stubs.  
    For each such pair, $d_v^x$ counts the number of times that node $v$ has role $x$ in $\h_0$.  
    We collect all stubs in a single multiset: 
    \begin{align*}
        \Sigma = \biguplus_{x \in \X} \biguplus_{v \in V} \underbrace{\left\{(v,x, 1),\ldots,(v,x, d_v^x)\right\}}_{d_v^x \text{ copies}}\;.
    \end{align*}
    Stub-matching proceeds via the following algorithm, which we describe informally. 
    For each edge $e$:
    \begin{enumerate}
        \item For each role $x$, uniformly sample $k_e^x$ stubs with role $x$ from $\Sigma$, without replacement, and combine them via multiset union. Add the result to the edge set $\E$.
        \item Send $\Sigma \mapsto \Sigma \setminus e$. 
    \end{enumerate}
    The algorithm terminates when it is impossible to form the next edge.
    When  $\mathbf{D}$ and $\mathbf{K}$ are configurable, stub-matching terminates when $\Sigma = \emptyset$ and produces a partition generating a stub-labeled hypergraph with specified degree and dimension sequences.  
    
    By construction, the output of stub-matching is distributed according to the uniform measure $\mu_0$ on the set $\Sigma_{\mathbf{D},\mathbf{K}}$ of partitions of indexed stubs subject to the first moment constraints. 
    Let $g:\Sigma_{\mathbf{D},\mathbf{K}} \rightarrow \mathbb{C}_{\mathbf{D},\mathbf{K}}$ be the map that sends to each such partition its associated hypergraph. 
    \begin{dfn}[Configuration Model]
        The \emph{annotated hypergraph configuration model} is the measure $\mu(\h) = \mu_0(g^{-1}(\h)|\h\text{  is nondegenerate})$.     
    \end{dfn}
    The configuration model $\mu$ weights elements of $\mathbb{C}_{\mathbf{D},\mathbf{K}}$ according to their likelihood of being realized via stub-matching, conditional on nondegeneracy. 
    In this, it differs from the uniform distribution on $\mathbb{C}_{\mathbf{D},\mathbf{K}}$, since the same annotated hypergraph can be realized through multiple configurations. 
    Indeed, any permutation of stubs of the form $(v, x, 1), (v, x, 2)$ does not alter the image under $g$.
    Because of this, the configuration model tends to give greater probabilistic weight to annotated hypergraphs in which there are many parallel edges. 

    By definition, we can in principle sample from $\mu$ by repeatedly performing stub-matching until a nondegenerate configuration is obtained, and then applying the function $g$. 
    This approach is usually impractical, as the probability of realizing a nondegenerate configuration is typically very low. 
    The generalization of limit laws such as those provided by \cite{Angel2016} for a dyadic configuration model governing the probability of nondegeneracy would be a welcome development beyond our present scope. 
    
    \subsection{Edge-Swap Markov Chains}

    	Edge-swap Markov Chain Monte Carlo provides an alternative approach to sampling from $\mu_{\mathbf{D}, \mathbf{K}}$. 
    	The benefit of this method of sampling is that, as long as it is initialized with a non-degenerate hypergraph, all samples produced are guaranteed to be nondegenerate. 
    	
        It is convenient to define edge swaps on the annotated bipartite graph $\B$. 
    	\begin{dfn}
    		An \emph{role-preserving edge-swap} on $\B$ is a map of pairs of edges: 
            \begin{align*}
                (v_1,e_1;x),(v_2,e_2;x) \mapsto  (v_2,e_1;x), (v_1,e_2;x) 
            \end{align*}
    	\end{dfn}
        We can also regard a role-preserving edge-swap of bipartite edges $f_1$ and $f_2$ as a map $\pi:\mathbb{B}\rightarrow \mathbb{B}$ that generates a new bipartite graph $\B'$. 
        In this case we write $\B' = \pi(\B|f_1, f_2)$.  
        Note that it is possible that $\B = \pi(\B|f_1, f_2)$; this occurs when $f_1 = (v, e_1;x)$ and $f_2 = (v, e_2;x)$ for some $v$ or $f_1 = (v_1, e;x)$ and $f_2 = (v_2, e;x)$ for some $e$. 
        Nondegeneracy rules out the case that $f_1 =  f_2 =  (v, e;x)$ for distinct $f_1$ and $f_2$. 

    	Let $\B^x$ denote the edges of $\B$ with role label $x$. 
    	Then, we can construct a Markov chain $\B_t\in \mathbb{B}$ by repeated role-preserving double edge swaps. 
    	Some care is needed to ensure that that each state of this chain is nondegenerate. 
    	The full algorithm is formalized in  \Cref{alg:MCMC}. 
		\begin{figure}
			\begin{algorithm2e}[H]
	           \DontPrintSemicolon
	            \caption{MCMC Sampling for $\mu$}\label{alg:MCMC}
	            \KwIn{degree sequence $\mathbf{d}$, initial annotated hypergraph $\h_0$ with bipartite graph $\B$,  sample interval $\delta t \in \mathbb{Z}_+$,  sample size $s \in \mathbb{Z}_+$.}
	            \textbf{Initialization:} $t \gets 0$, $\B \gets \B_0$\;
	            \While{$t \leq s(\delta t)$}{
	                sample $e_1,e_2$ u.a.r. from $\binom{\E_t}{2}$\; 
	                sample $f_1 = (v_1,e_1;x_1)$ and $f_2 = (v_2,e_2;x_2)$ u.a.r. from $e_1$ and $e_2$ \;
	                \uIf{$x_1 \neq x_2$}{pass}
	                \uElse{$\B' \gets \pi(\B_t|f_1,f_2)$\;
                        \If{$\B'$ is nondegenerate}{
                            $\B_{t+1} \gets \B'$\;
                            $t \gets t+1$
                        }
                    }
	            }
	            \KwOut{$\{\B_t \text{ such that } t|\delta t\}$}
	        \end{algorithm2e}
        \end{figure}
    	The Markov chain $\B_t$ of hypergraphs induces a chain $\h_t \in \mathbb{H}$ of annotated hypergraphs. 
    	\begin{thm}
    		The Markov chain $\h_t \in \mathbb{H}$ is irreducible and reversible with respect to $\mu$. 
    	\end{thm}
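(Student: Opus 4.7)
The plan is to establish the two claims separately. Reversibility will follow from a direct detailed-balance calculation in which the proposal multiplicities cancel against the ratio $\mu(\h')/\mu(\h)$, while irreducibility will reduce to the classical connectivity of double-edge-swap chains on bipartite graphs, applied one role at a time, with additional care taken to keep every intermediate state nondegenerate.

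For reversibility, I would begin by counting $N(\h) := |g^{-1}(\h)|$, so that $\mu(\h)\propto N(\h)$ on the nondegenerate set. Writing $m_e$ for the multiplicity of edge $e$ in $\E$, the $d_v^x$ stubs of each node-role pair are interchangeable, while the $m_e!$ relabelings of each multi-edge copy collapse in the partition, yielding
$$
N(\h) = \frac{\prod_{v,x} d_v^x!}{\prod_e m_e!}.
$$
Detailed balance then amounts to $N(\h)\,P(\h\to \h') = N(\h')\,P(\h'\to \h)$ for single-swap neighbors $\h,\h'$. Unpacking \Cref{alg:MCMC}, the proposal probability of the swap interchanging $v_1\in e_1$ with $v_2\in e_2$ is proportional to $m_{e_1}^{\h} m_{e_2}^{\h} /(k_{e_1}k_{e_2})$: the numerator counts the distinct slot-pair choices in $\E$ that realize the chosen edge-pair, while the denominator comes from the uniform within-edge choice of a vertex-role. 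Because $\mathbf{K}$ is preserved, the $k_{e_i}$ factors cancel with their $\h'$-counterparts, and the remaining multi-edge multiplicities collapse via $m_e^{\h}/m_e^{\h}! = 1/(m_e^{\h}-1)!$ into exactly the ratio $N(\h')/N(\h)$, giving detailed balance.

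For irreducibility, I would first exploit the role-preserving structure: a swap with label $x$ modifies only the role-$x$ incidences of $\B$, so one may perform all role-$x_1$ modifications, then all role-$x_2$ modifications, and so on, without interference. It therefore suffices to prove that for each fixed $x$, any two realizations of $\B^x$ with marginals $(\mathbf{d}^x,\mathbf{k}^x)$ are connected via nondegenerate role-$x$ swaps. Without the nondegeneracy constraint, this is the classical Ryser-type fact that double edge swaps connect all $0$--$1$ matrices with fixed row and column sums. To preserve nondegeneracy along the way, I would show that whenever a direct swap would create a node repetition in some edge (either colliding within role $x$ or with a vertex already present under another role), the same net change can be realized by a detour: swap the offending vertex out through a third, freely chosen bipartite edge, perform the intended swap, then restore the auxiliary edge.

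The hard part is showing that this nondegenerate detour always exists. I expect this to require a case analysis distinguishing within-role and cross-role repetitions, together with a counting argument showing that configurability of $(\mathbf{D},\mathbf{K})$ and the nondegeneracy of $\h$ together guarantee at least one auxiliary edge and vertex capable of mediating every detour, so that the nondegenerate subspace of $\mathbb{C}_{\mathbf{D},\mathbf{K}}$ remains connected under role-preserving edge-swaps.
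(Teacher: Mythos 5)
Your reversibility argument is sound and takes a genuinely different route from the paper's. You work directly at the hypergraph level, computing $N(\h)=\abs{g^{-1}(\h)}=\prod_{v,x}d_v^x!/\prod_e m_e!$ and checking detailed balance: the degree factorials cancel because $\mathbf{D}$ is invariant, the $k_e$ factors cancel because role-preserving swaps preserve edge dimensions, and the ratio of proposal multiplicities $m_{e_1}m_{e_2}/(\tilde m_{e_1'}\tilde m_{e_2'})$ indeed matches $\prod_e m_e!/\prod_e \tilde m_e!$ in the generic case. The paper avoids this bookkeeping entirely by lifting to the \emph{stub-labeled} bipartite space, where the target measure $\mu_0$ is uniform and the kernel is symmetric because the sampling probability $r(f_1,f_2|\bar{\B})$ depends only on quantities (edge sizes and within-edge role counts) that are invariant under swaps; reversibility of $\h_t$ with respect to $\mu$ then follows by pushing forward through $g$. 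Your route buys an explicit formula for $\mu$; the paper's buys freedom from the coincidence cases you would still need to check (parallel copies of $e_1$ and $e_2$, swaps that merge or split multi-edge classes), where $m_{e_1}m_{e_2}$ is no longer the correct count of slot pairs.

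The irreducibility half, however, has a genuine gap. Your plan --- fix one role at a time, invoke the classical interchange theorem for $0$--$1$ matrices with prescribed margins, and repair nondegeneracy violations by detours --- rests entirely on the claim that a nondegenerate detour always exists, and you explicitly leave that as an expectation rather than a proof. This is precisely the hard content: a role-$x$ swap can insert $v$ into an edge that already contains $v$ under a different role $y$, your sequential role-by-role schedule maximizes exposure to exactly these cross-role collisions, and nothing in the sketch rules out that every candidate auxiliary edge and vertex for the detour is itself blocked. The paper's argument never invokes the interchange theorem: working with stub-labeled bipartite graphs $\bar{\B},\bar{\B}'$ with identical marginals, it exhibits at every step a concrete swap --- obtained by locating $(v_1,e_1;x,i_1)\in\bar{\B}^x\setminus\bar{\B}'^x$ together with the edges $(v_1,e_2;x,i_1)$ and $(v_2,e_1;x,i_2)$ forced to exist in $\bar{\B}'^x\setminus\bar{\B}^x$ by the shared marginals --- that strictly reduces $\abs{\bar{\B}^x\setminus\bar{\B}'^x}$, and iterates. (Even the paper is terse about nondegeneracy of the intermediate states, so your instinct that this point is delicate is well placed; but the paper at least produces a supported move at every step, whereas your argument is incomplete without a proof of the detour lemma.)
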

    	\begin{proof}
            It is convenient to first view stub-matching as an algorithm for generating stub-labeled bipartite graphs. 
            To do so, we observe that the output partition of stub-matching defines a bipartite graph similar to $\B$, except that to each edge $(v, e ;x)$ is associated an integer between $1$ and $d_v^x$. 
            Let $\bar{\B}$ denote this stub-labeled bipartite graph, and $\bar{\mathbb{B}}$ the set of such graphs. 
            We recover a standard bipartite graph $\B$ from $\bar{\B}$ by erasing the stub-labels. 

            We now observe that a bipartite edge-swap $\bar{\B}_{t+1} = \pi(\bar{\B}_t|f_1,f_2)$ always produces a new element of $\mathbb{B}$, since each bipartite edge has a distinct stub-label. 
            For the same reason, if $\bar{\B}_{t+1} = \pi(\bar{\B}_t|f_1,f_2)$, then $f_1$ and $f_2$ are the only bipartite edges for which this relation holds. 
            In particular, the transition kernel of the edge-swap Markov chain may be written
            \begin{align*}
                P(\bar{\B}'|\bar{\B}) = 
                    \begin{cases}
                        r(f_1, f_2|\bar{\B}) &\quad \exists f_1, f_2 : \bar{\B}' = \pi(\bar{\B}|f_1,f_2) \\
                        0 &\quad \text{otherwise,}
                    \end{cases}
            \end{align*}
            where $r(f_1, f_2|\bar{\B})$ is the probability that bipartite edges $f_1$ and $f_2$ are sampled and that $x_1 = x_2$. 
            It follows that $P$ will be reversible with respect to the  uniform measure on $\mathbb{B}$ provided that 
            $r(f_1,f_2|\bar{\B}) = r(f_1',f_2'|\bar{\B}')$ whenever $\bar{\B}' = \pi(\bar{\B}|f_1,f_2)$ and $\bar{\B} = \pi(\bar{\B}'|f_1',f_2')$.
            To see why this is the case, note that $r(f_1,f_2|\bar{\B})$ depends on $\bar{\B}$ only through the sizes of edges and the role distributions within each edge. 
            These quantities are preserved under double edge swaps. 
            We thus conclude that $\B_t$ is reversible with respect to $\mu_0$, and therefore $\h_t$ is reversible with respect to $\mu$.

            It remains to show irreducibility. 
            We will construct a supported path in state space from  $\bar{\B}$ to $\bar{\B}'$, where these are stub-labeled bipartite graphs with fixed marginals.  
            Let $\E$ and $\E'$ denote the respective edge sets of these bipartite graphs. 
            Choose $x$ such that $\bar{\B}^x \setminus \bar{\B}'^x$ is nonempty, and let $(v_1,e_1;x,i_1) \in \bar{\B}^x \setminus \bar{\B}'^x$. 
            Then, there must exist edges of the form $(v_1,e_2;x, i_1)$ and $(v_2,e_1;x, i_2)$ in $\bar{\B}'^x\setminus \bar{\B}^x$, since node $v_1$ must be connected to some edge with role $x$, and similarly edge $e_1$ must be connected to some node with role $x$. 
            In particular, the swap $(v_1,e_2; x, i_1), (v_2,e_1;x,i_2) \mapsto  (v_1,e_1;x,i_1), (v_2,e_2;x,i_2)$ reduces the size of the set $\bar{\B}^x \setminus \bar{\B}'^x$ by at least one. 
            Repeating this procedure allows us to reduce the size of $\bar{B}^x \setminus \bar{\B}'^x$ indefinitely, and therefore constitutes a supported path between them. 
    	\end{proof}

        \begin{cor}
            As $\delta t\rightarrow \infty$, the output of \Cref{alg:MCMC} is asymptotically independent and identically distributed according to $\mu$. 
        \end{cor}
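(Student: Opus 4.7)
The plan is to derive the corollary from the preceding theorem by invoking standard finite Markov chain convergence. Since $n$, $m$, and $\abs{\X}$ are finite, so is $\mathbb{C}_{\mathbf{D},\mathbf{K}}$, and hence the state space of $\h_t$ is finite. The theorem already supplies irreducibility and reversibility with respect to $\mu$, so $\mu$ is the unique stationary distribution on this state space.

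The missing ingredient is aperiodicity. I would establish this by exhibiting a state at which the chain admits a self-transition of positive probability; irreducibility then propagates aperiodicity to every state. Concretely, whenever the sampled bipartite edges take the form $f_1 = (v, e_1; x)$ and $f_2 = (v, e_2; x)$ for a common node $v$ playing role $x$ in two distinct hyperedges, the map $\pi$ returns $\B_t$ unchanged, and \Cref{alg:MCMC} accepts this as a valid (trivially nondegenerate) update, giving $\B_{t+1} = \B_t$. This situation is reachable whenever $d_v^x \geq 2$ for at least one pair $(v,x)$, which is generic for configurations of practical interest.

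With irreducibility, reversibility, and aperiodicity in place, the fundamental theorem for finite Markov chains yields
\begin{align*}
\left\|P^{\delta t}(\h_0, \cdot) - \mu\right\|_{\mathrm{TV}} \longrightarrow 0 \quad \text{as } \delta t \to \infty,
\end{align*}
for any initial $\h_0 \in \mathbb{C}_{\mathbf{D},\mathbf{K}}$. Combining this marginal convergence with the Markov property and a straightforward induction on the sample index lifts the statement to joint convergence of $(\h_{\delta t}, \h_{2\delta t}, \ldots, \h_{s\,\delta t})$ toward the product measure $\mu^{\otimes s}$ for each fixed $s$, which is precisely the asymptotic i.i.d.\ claim.

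The main obstacle is the aperiodicity verification, as the rest of the argument is a routine appeal to the standard ergodic theorem for finite, irreducible, aperiodic Markov chains. In degenerate cases where no node plays any role more than once, the trivial-swap mechanism above fails and a finer analysis of cycle lengths in the swap graph would be required; for empirical inputs this is not a concern.
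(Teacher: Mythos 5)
The paper states this corollary without proof, treating it as an immediate consequence of the preceding theorem, and your argument is exactly the standard one that the authors are implicitly invoking: finite state space, irreducibility and reversibility from the theorem, aperiodicity, and then the ergodic theorem plus the Markov property to upgrade marginal convergence to asymptotic independence of the spaced samples. Your proof is correct, and your treatment of aperiodicity is in fact more careful than the paper's. The self-loop you exhibit is real --- the paper itself notes that $\pi(\B\,|\,f_1,f_2) = \B$ when $f_1 = (v,e_1;x)$ and $f_2 = (v,e_2;x)$ share a node, and since \Cref{alg:MCMC} accepts such a trivial swap and increments $t$, this gives a positive-probability self-transition whenever some $d_v^x \geq 2$. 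Your caveat about the remaining case is also substantive rather than pedantic: if every node plays every role at most once, the corollary can literally fail (two nodes, two singleton edges, one role yields a deterministic period-two chain), so some such hypothesis is genuinely needed and the paper elides it. The only point worth tightening is the phrase ``which is generic for configurations of practical interest'' --- ideally you would state the condition $\max_{v,x} d_v^x \geq 2$ as an explicit hypothesis of the corollary rather than an appeal to genericity.
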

        
        For the feature of guaranteed nondegeneracy, we pay a computational cost in mixing times. 
    	Unfortunately, no results are extant for this class of edge-swap Markov chains, while the best available upper bound for the mixing time of a related class of chains \cite{greenhill2014switch,greenhill2011polynomial} scales poorly with the node degrees and total number of edges. 
    	Despite this, edge-swap Markov chains can be deployed in a variety of practical settings, see \cite{Fosdick2018} for a review.

\section{Analysis of Annotated Hypergraphs}
\label{sec:observables}

    In this section, we introduce a series of tools for measuring the structural properties of annotated hypergraphs while flexibly incorporating information about roles. 
    We split these into two categories; those that can be natively defined on the annotated hypergraph structure, and those that can be measured using a projection of the annotated hypergraph to a weighted  directed network.

\subsection{Native Polyadic Observables}

\subsubsection{Role Densities}
    
    The simplest nontrivial statistic is the \emph{individual role density} associated to a node. 
    The individual role density is a probability distribution summarizing the proportion of interactions in which node $v$ plays each role. 
    It may be computed as 
    \begin{align}
        \label{eqn:individual_role_density}
        \mathbf{p}_v = \frac{\mathbf{d}_v}{\bracket{\mathbf{e}, \mathbf{d}_v}}\;,
    \end{align}
    where $\mathbf{e}$ is the vector of ones, and $\mathbf{d}_v$ is the vector of degrees of node $v$ in each role, defined in \Cref{eqn:node_degrees}. 

\subsubsection{Local Role Density}

    It is of interest to compare the individual role density $\mathbf{p}_v$ of node $v$ to the proportion of roles in a neighborhood of $v$. 
    Let $c^y_v$ give the number of times in which a node incident to $v$ plays role $y$, other than $v$ itself. 
    We then have, 
    \begin{align*}
         c^y_v = \sum_{x \in \X} \sum_{e \in \E} \mathbb{I}(\ell(v,e) = x)\sum_{u \in e, u \neq v}\mathbb{I}(\ell(u,e) = y) = \sum_{e: v \in e} k_e^y - d_v^y\;.
    \end{align*}
    Normalizing yields the \emph{local role density} $\mathbf{p}'$. 
    \begin{align}
        \label{eqn:local_role_density}
        \mathbf{p}'_v = \frac{\mathbf{c}_v}{\bracket{\mathbf{e}, c_v}}\;.
    \end{align}
    The local role density measures not the behavior of node $v$, but rather the typical behavior of the nodes with which $v$ interacts. 

\subsubsection{Assortativity}
    
    Classically, degree-assortativity measures the tendency of nodes of similar degrees to connect to each other. 
    In particular, it is often observed that nodes of high degree tend to connect to other nodes of high degree. 
    In an annotated hypergraph, each node possesses a distinct degree corresponding to each role. 
    We therefore develop a role-dependent assortativity measure, similar to assortativity for directed graphs. 
    The measure we choose generalizes that of \cite{chodrow2019configuration}, which was formulated for hypergraphs without annotations. 
    We first provide the mathematical formulation, and then discuss the nature of the correlation it measures. 

    Fix roles $x,y \in \X$. 
    For any nodes $u,v \in \V$, let 
    \begin{align*}
        s^{xy}_{uv} &= \sum_{e \in \E} \mathbbm{I}(\ell(u,e) = x)\mathbbm{I}(\ell(v,e) = y)
    \end{align*}
    count the number of edges in which  $u$ has role $x$ and $v$ has role $y$. 
    We compute an assortativity score as a correlation coefficient between the random variables $d_U^x - s^{xy}_{UV}$ and $d_V^y-s^{xy}_{UV}$, where the random nodes $U$ and $V$ are sampled according to a two-stage scheme.
    We first select a uniformly random edge $e$ from $E$. 
    From $e$, we then select a uniformly random pair of distinct nodes $U$ and $V$, conditioning the entire process on the roles $x$ and $y$. 
    The resulting probability law for $U$ and $V$ is 
    \begin{align}
        \prob^{xy}(U = u, V = v) = \frac{\sum_{e \in E} \binom{k_e}{2}^{-1} \mathbbm{I}(\ell(u,e) = x)\mathbbm{I}(\ell(v,e) = y)}{\sum_{e \in E} \binom{k_e}{2}^{-1} \mathbbm{I}(u \in e)\mathbbm{I}(v \in e)}\;. \label{eq:law}
    \end{align}
    To compute a Spearman assortativity coefficient, let $q_{uv}^x$ denote the the rank of $d_u^x - s^{xy}_{uv}$ among all pairs $u$ and $v$. 
    Then, the Spearman assortativity coefficient is given by  
    \begin{align}
        \rho^{xy} = \frac{\text{cov}\left(q_{UV}^x, q_{VU}^y\right)}{\sqrt{\var{q_{UV}^x}\var{q_{VU}^y}}} =  \frac{\mathbb{E}[(q_{UV}^x - \mathbb{E}[q_{UV}^x])(r_{VU}^y - \mathbb{E}[q_{VU}^x])]}{\sqrt{\mathbb{E}[(q_{UV}^x - \mathbb{E}[q_{UV}^x])^2]\mathbb{E}[(q_{VU}^y - \mathbb{E}[q_{VU}^y])^2]}}\;, \label{eq:spearman}
    \end{align}
    with expectations computed with respect to \Cref{eq:law}. 
    By construction, $\rho^{xy}$ is symmetric in the roles $x$ and $y$. 
    Note also that the definition of $q_{uv}^x$ excludes from the calculation instances in which $u$ and $v$ themselves interact in these roles.  
    The assortativity coefficient supports quantitative investigations of questions like: is it statistically the case that the sender and receiver of a given email tend to send and receive many emails, respectively? 
    Do scholars with many first-authorships tend to collaborate with scholars with many last-authorships? 

    Since it can in practice be difficult to compute the expectations appearing in \Cref{eq:spearman} exactly, it is often convenient to estimate them via repeated sampling from \Cref{eq:law}. 
    This is the method used in our experiments below. 

\subsection{The Weighted Projection} \label{sec:projection}

    We often wish to apply tools from dyadic graph theory and network science to polyadic data. 
    The usual way to do this is to \emph{project} the latter by replacing each $k$-dimensional hyperedge with a $k$-clique. 
    When role information is available, we can perform more flexible projections. 
    Let $\mathbf{R} = [r^{xy}] \in \R^{p\times p}$. 
    We refer to $\mathbf{R}$ as a \emph{role-interaction kernel}, which describes directed interaction strengths between pairs of roles. 
    We do not place any restriction on the values in $\mathbf{R}$, although we can without loss of generality rescale to ensure that $\max_{x,y}\abs{r^{xy}} = 1$. 
    In all our examples, the entries of $\mathbf{R}$ will be nonnegative, but in principle negative interaction weights can also be used. 

    We compute the weighted projection matrix $\mathbf{W} = \mathbf{W}(\h, \mathbf{R})$ entrywise via the formula
    \begin{align}
        w_{uv} = \sum_{x,y \in \X}r^{xy} \sum_{e \in \E} \mathbb{I}(\ell(u,e) = x)\mathbb{I}(\ell(v,e) = y)\;. \label{eqn:weighted_network}
    \end{align}
    Each entry $w_{uv}$ thus counts the number of edges connecting any pair of nodes, weighted by the entries of $\mathbf{R}$. 
    To illustrate, let us first consider the directed hypergraphs of \cite{gallo1993directed}. 
    
    Directed hypergraphs possess two possible roles, ``source'' and ``target.''
    To project a directed hypergraph to a weighted dyadic graph that respects the roles, we can compute  \Cref{eqn:weighted_network} using the interaction kernel 
    \begin{align*}
        \mathbf{R} = \kbordermatrix{
            & \text{source} & \text{target}  \\ 
            \text{source}   & 0 & 1  \\ 
            \text{target}   & 0 & 0
        }\;. 
    \end{align*}
    The result is a weighted directed graph in which $w_{uv}$ counts the number of hyperedges in which $u$ appears as a source and $v$ as a target. 
    This example is somewhat trivial, but the benefit of our general formalism is that flexible modeling choices are possible. 
    For example, in our case study of the Enron email data set below we use the interaction kernel 
    \begin{align*}
        \mathbf{R} = \kbordermatrix{
            & \text{from} & \text{to} & \text{cc} \\ 
            \text{from} &0 & 1 & 0.25\\ 
            \text{to} &0 & 0 & 0\\ 
            \text{cc} &0 & 0 & 0
        }\;. 
    \end{align*}
    This kernel reflects an assumption that information travels efficiently from senders to direct receivers, but more weakly to cc'd receivers. 

\subsubsection{Centrality}
    
    Standard, dyadic centrality analysis may be performed on the weighted projected graph of an annotated hypergraph. 
    For our examples in \Cref{sec:results} we consider the results of computing Pagerank \cite{page1999pagerank} and eigenvector centrality on weighted projected networks \cite{newman2010networks}.
    We note that there are alternative approaches to defining centrality in hypergraphs such as \cite{zhou2007learning} and \cite{benson2019three}. 
    The first of these uses a random walk formulation that can be represented via normalized weighted projections. 
    The latter is applicable only for $k$-uniform hypergraphs. 
    Our approach, while not a direct generalization of either, is considerably more flexible in applied data analysis.

\subsubsection{Modularity and Community Detection}

    We extend the notion of modularity to annotated hypergraphs. 
    Many such extensions are possible. 
    Unlike a recent proposal \cite{Kami2018}, we define a dyadic notion of modularity via null expectations for the weighted projected graph computed via \Cref{eqn:weighted_network}. 
    While our approach loses some higher-order information in the modularity calculation, it has the benefit of allowing roles to be flexibly incorporated into the null expectation.

    Recall that $w_{uv}$ gives the observed weighted edge count from $u$ to $v$, with the weights specified via $\mathbf{R}$. 
    In order to derive a working notion of dyadic modularity, we need only to estimate the expectation of $w_{uv}$ under a suitably-chosen null. 
    We will approximate this expectation under the annotated hypergraph configuration model. 
    
    Let us estimate $\E_\mu[M_{uv}^{xy}]$, the expected number of edges that contain node $u$ in role $x$ and node $v$ in role $y$. 
    We begin by forming an edge $e$ via stub-matching. 
    There are $k^x_e$ $x$-stubs that must be selected to form $e$. 
    Each of these has probability approximately $\frac{d^x_u}{\sum_{\ell}d^x_\ell} = \frac{d^x_u}{\bracket{\mathbf{e}, \mathbf{d}^x}}$ to be node $u$.
    Supposing the probability of degeneracy in an individual edge to be small, we can thus approximate the probability that $e$ contains $u$ in role $x$ as $k^x_e\frac{d^x_u}{\bracket{\mathbf{e}, \mathbf{d}^x}}$. 
    Similarly, the probability that $e$ contains $v$ in role $y$ is $k^y_e\frac{d^y_v}{\bracket{\mathbf{e}, \mathbf{d}^y}}$.
    Summing over edges gives our approximation for $\E_\mu[M_{uv}^{xy}]$: 
    \begin{align*}
        \E_\mu[M_{uv}^{xy}] &= \sum_{e \in E} k_e^xk_e^y \frac{d_u^xd_v^y}{\bracket{\mathbf{e}, \mathbf{d}^x}\bracket{\mathbf{e}, \mathbf{d}^y}}\;.
    \end{align*}
    We can write this expression more compactly as 
    \begin{align*}
        \E_\mu[\mathbf{M}^{xy}] = \frac{\bracket{\mathbf{k}^x, \mathbf{k}^y}(\mathbf{d}^x\otimes \mathbf{d}^y)}{\bracket{\mathbf{e}, \mathbf{d}^x}\bracket{\mathbf{e}, \mathbf{d}^y}} \;,
    \end{align*}
    where $\otimes$ denotes the vector outer product.

    To compute $\E_\mu[\mathbf{W}]$, we weight by $\mathbf{R}$ and sum over role pairs: 
    \begin{align*}
        \E_\mu[\mathbf{W}] = \sum_{x,y\in \X} r^{xy} \E_\mu[\mathbf{M}^{xy}]\;.
    \end{align*}
    We then define a dyadic modularity score of a partition $g$:  
    \begin{align}
        Q_\mu(g) = \frac{1}{\bracket{\mathbf{e},\mathbf{W}\mathbf{e}}} \trace{\mathbf{G}^T(\mathbf{W} - \mathbb{E}_\mu[\mathbf{W}]) \mathbf{G}}\;, \label{eq:modularity}
    \end{align}
    where $\mathbf{G}$ is the one-hot encoding matrix of the partition $g$. 
    As usual, the pre-factor ensures that $-1 \leq Q_\mu(g) \leq 1$. 
    
    It is important to clarify the nature of the null model used in the modularity calculation. 
    A procedure that is commonly followed for studying polyadic data is to construct a projected graph and perform modularity maximization with respect to an implicit null defined over dyadic graphs. 
    In contrast, we have defined a null over the space of annotated hypergraphs, and then computed expectations in the projected graph with respect to this higher-order null. 
    This approach has the benefit of preserving some information about polyadic interactions in the modularity score, even though this score is natively dyadic. 
    These two approaches will generally lead to different null matrices and therefore different partitions. 
    
    In order to approximately maximize \Cref{eq:modularity}, we adopt the multiway spectral algorithm of \cite{zhang2015multiway}, though many alternatives are possible. 
    The matrix $\mathbf{W}$ is not symmetric, and therefore it is necessary to make a small adjustment to this algorithm \cite{leicht2008community}. 
    Rather than computing the leading eigenvectors of the matrix $\mathbf{W} - \mathbb{E}_\mu[\mathbf{W}]$, we instead compute the eigenvectors of the symmetrized form $\frac{1}{2}(\mathbf{W} + \mathbf{W}^T - \mathbb{E}_\mu[\mathbf{W} + \mathbf{W}^T])$. 
    The multiway spectral algorithm is then applied to perform this task. 
    
\section{Results}
\label{sec:results}

    We illustrate how annotated hypergraphs and their null models can be used to enrich analysis of previously studied data sets.
    
\subsection{Enron Case Study}

    We first focus our analysis on the Enron email data set \cite{klimt2004introducing}.
    This data contains the emails from employees of the company prior to its forced bankruptcy and shutdown due to corporate fraud and corruption.
    While this data is temporal in nature we consider only the time-aggregated graph, neglecting the timestamps of edges.
    We also only consider official email accounts of Enron employees, referred to as the `core' group.

\subsubsection{Role Distributions and Assortativity}

    \begin{figure}
        \centering
        \begin{subfigure}{0.32\linewidth}
        \centering
        \includegraphics[width=\linewidth]{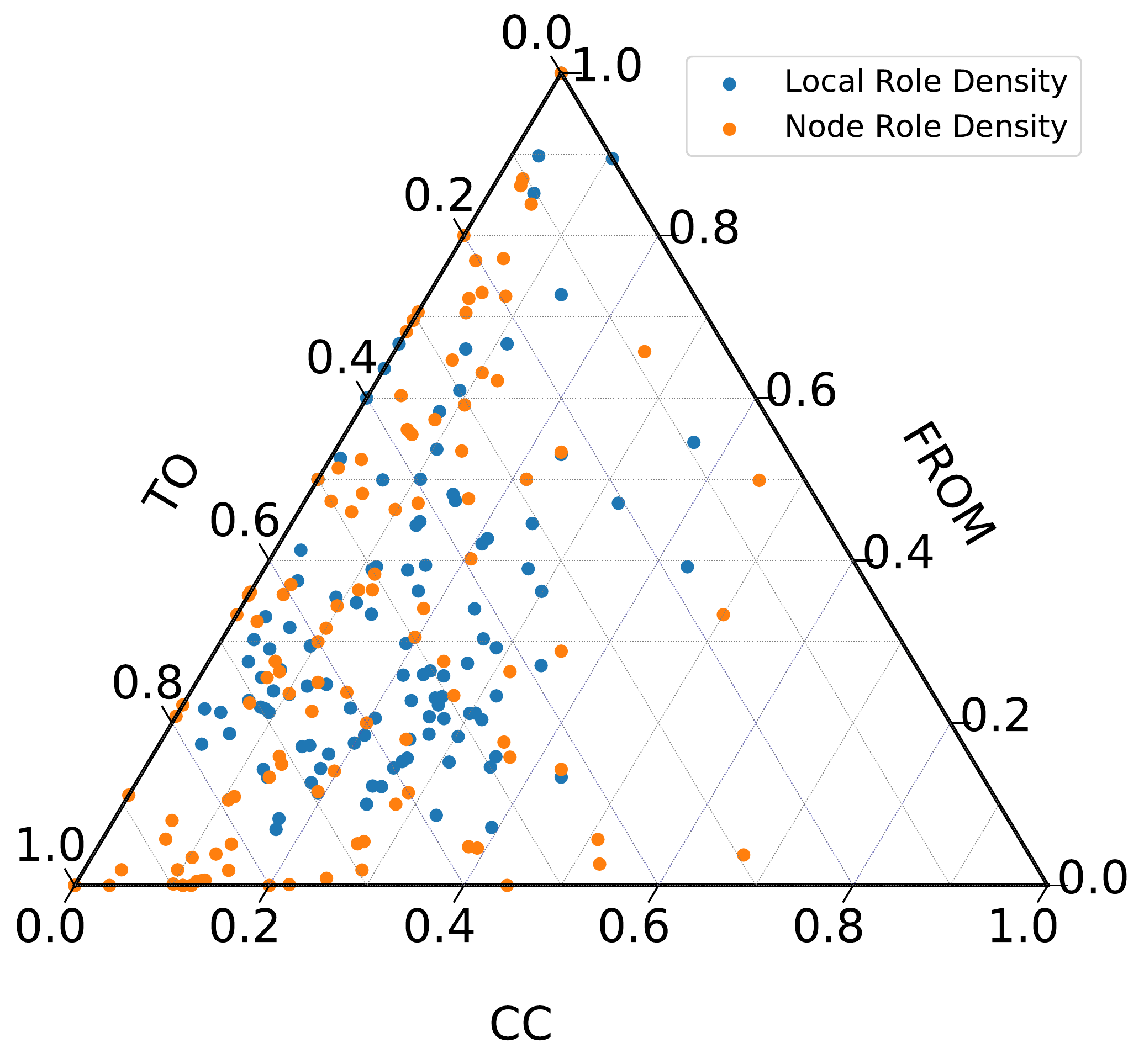}        
        \caption{}
        \end{subfigure}
        \begin{subfigure}{0.32\linewidth}
        \centering
        \includegraphics[width=\linewidth]{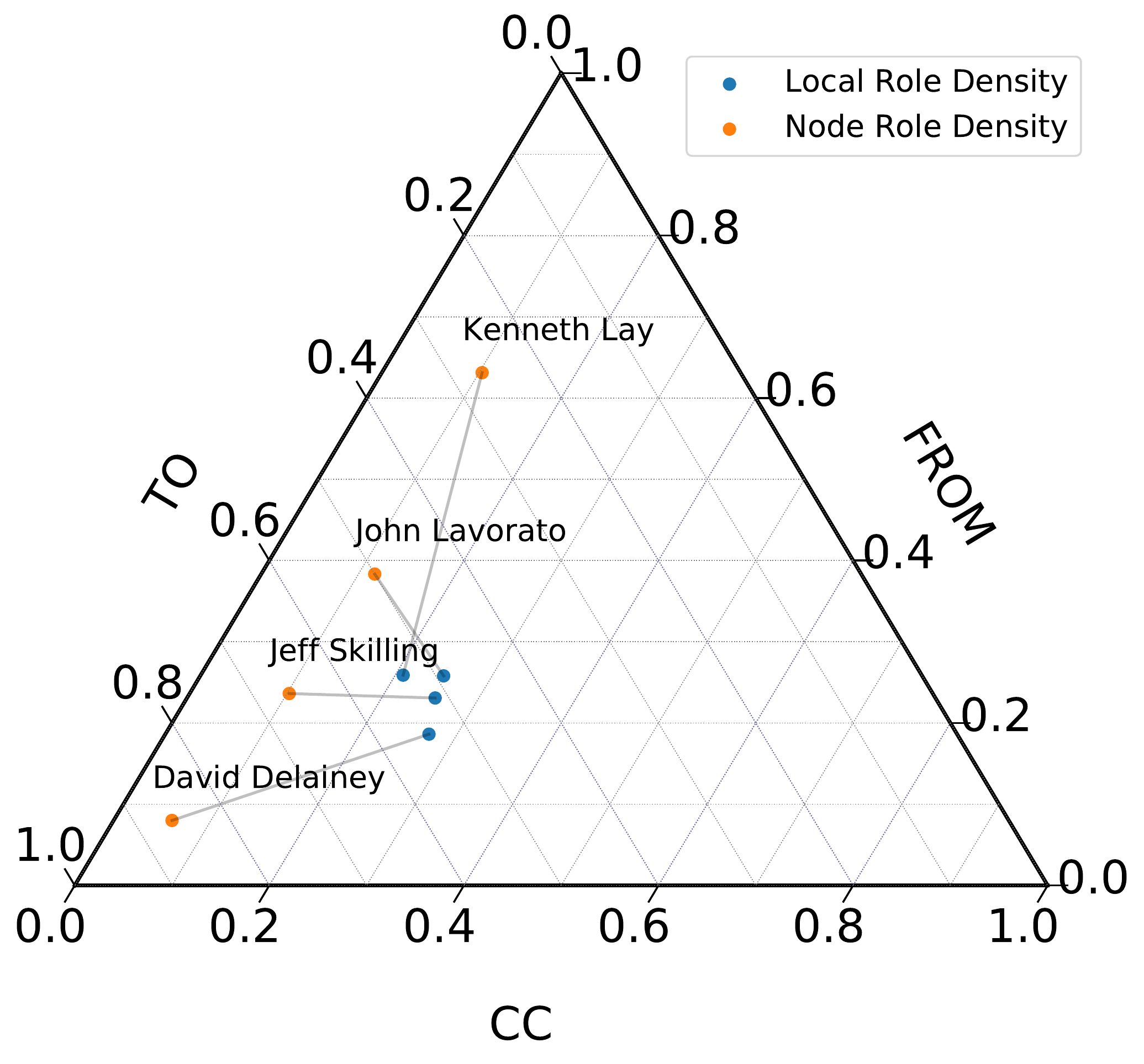}        
        \caption{}
        \end{subfigure}
        \begin{subfigure}{0.32\linewidth}
        \centering
        \includegraphics[width=\linewidth]{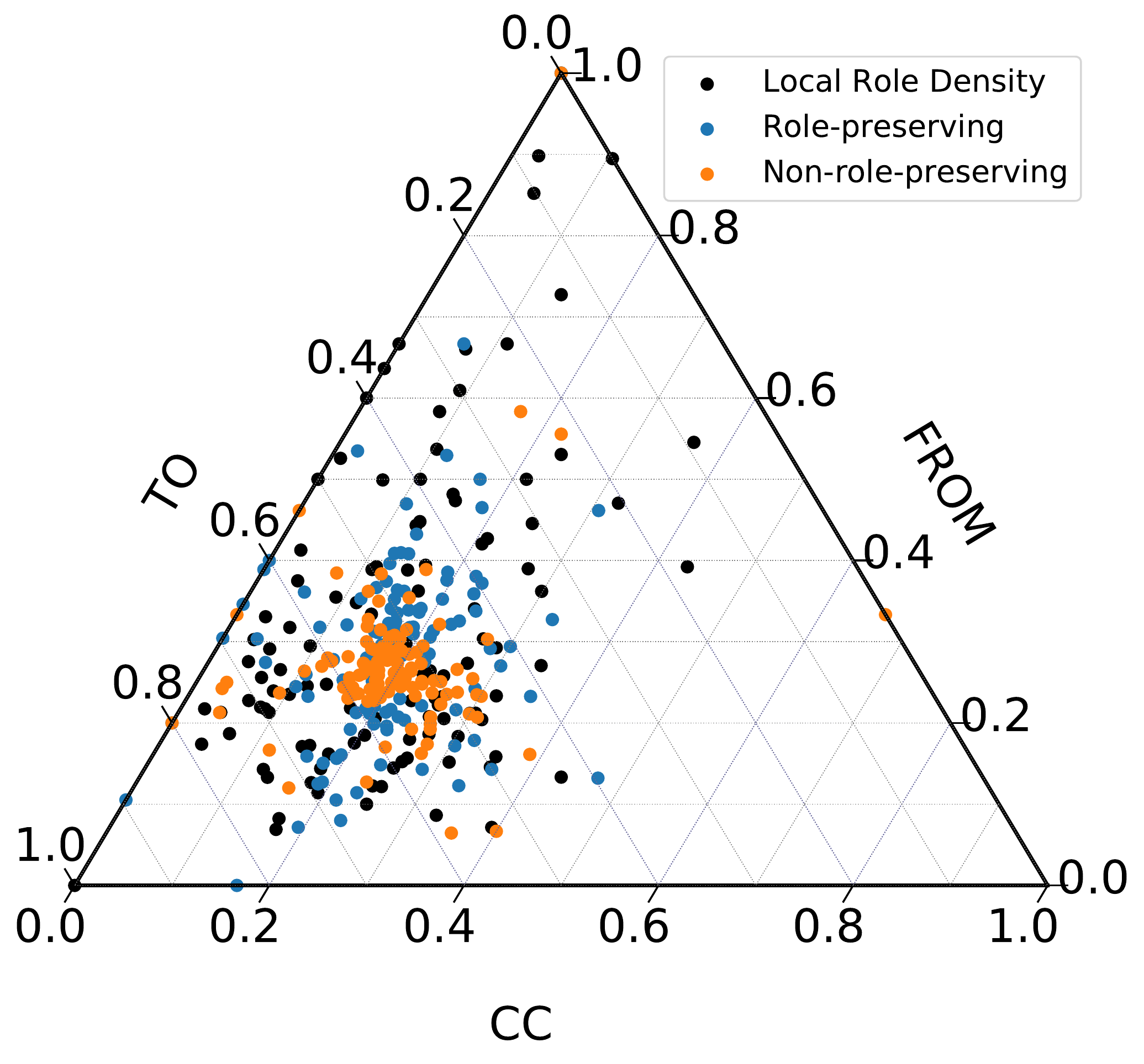}        
        \caption{}
        \end{subfigure}
        \caption{
            Role distributions in the Enron data set.
            (A). The individual role densities (orange) and local role densities (blue).
            (B). The individual role densities (orange) and local role densities (blue) of the four Enron CEOs, now with lines connecting the statistics for each node.
            (C). The local role densities (black) compared with a single sample from the role-preserving ensemble (blue) and non-role-preserving ensemble (orange). 
        }
        \label{fig:enron_roles}
    \end{figure}

    The data contains three roles: ``from'', ``to'', and ``cc.''
    These describe the sender, recipients, and carbon copy recipients of emails respectively\footnote{
        Blind carbon copy, or ``bcc,'' has been merged into `cc' for simplicity.
    }.
    Depending on the occupation of the employee, the number of emails they send and receive (and therefore their role participation) will vary.
    \Cref{fig:enron_roles}(A) shows the distribution individual and local role densities for the data.
    Here we see a diverse range of behaviour, with an increased diversity in individual node distributions compared to local distributions.
    For example, there are nodes that have exclusively message senders and exclusive receivers, but no node neighbourhoods consist of exclusively one role.
    The extent of role hybridization highlights the importance of modeling roles as properties of node-edge pairs, since no node can be assigned a single role. 

    \Cref{fig:enron_roles}(B) shows the individual and local role densities for the four CEOs of the company, now each connected by a line.
    Here we see that, while the CEOs correspond with other individuals who perform similar roles, they exist across a spectrum of behaviour themselves, namely in whether they are senders or receivers of emails.
    For example, Kenneth Lay was the sender of emails roughly $65\%$ of the time, in comparison with David Delainey, who played that role in less than $10\%$ of interactions.
    This passive role of David Delainey as a receiver of information rather than originator of it may be reflected in the lesser sentencing he received in comparison to Lay. 
    
    In \Cref{fig:enron_roles}(C) we show the effect of randomization under hypergraph configuration models. 
    The configuration model introduced in \Cref{sec:definitions} preserves the roles of all nodes, while in the non-role-preserving variant we simply erase the role labels. 
    The samples from both null models show a reduction in heterogeneity of the role distribution -- randomization has the effect of homogenizing the population. 
    This effect is especially apparent under non-role-preserving randomization and the local role distribution converges towards the average over all nodes. 
    
    \begin{figure}
        \centering
        \includegraphics[width=\linewidth]{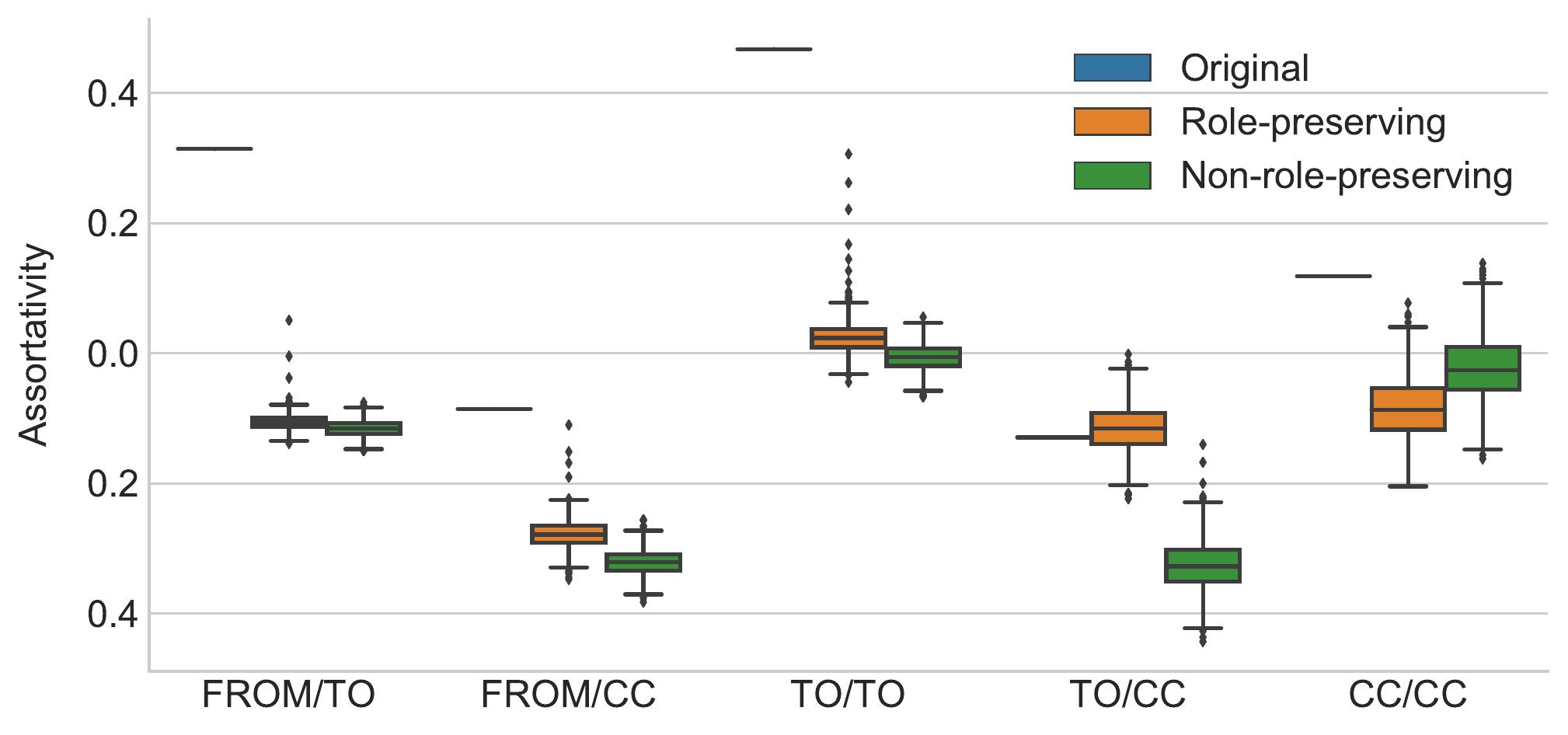}
        \caption{
            Hypothesis-testing for significant role-dependent assortativity as measured by \Cref{eq:spearman}.  
            The empirical values (black bars) are shown alongside labeled (orange) and unlabeled (green) configuration null distributions for each combination of roles.
        }
        \label{fig:enron_assortativity}
    \end{figure}

    The role assortativity, defined in \Cref{eq:spearman}, captures the tendency of nodes that frequently play one role to be associated with nodes that frequently play another. 
    \Cref{fig:enron_assortativity} compares the the observed and null-distributed assortativities $\rho^{xy}$ for each combination of roles in the Enron data set. 
    Five combinations are shown -- there are six pairs of role labels, and the `from/from' combination is vacuous since all emails have exactly one sender. 
    In four of the five combinations, the assortativity coefficient is much higher than would be expected under the null and would generally be judged statistically significant. 
    The positive assortativities in the `from/to' and `from/cc' combinations quantify the tendency of prolific senders to share information with prolific receivers.
    The latter combination highlights the importance of using null models to contextualize network measurements. 
    Despite the fact that the observed assortativity is negative, the null distributions are even more so. 
    The data should therefore be judged more assortative than expected by chance. 
    The `to/to' and `cc/cc' combinations quantify the tendency of important receivers to do so along the same communication threads.
    The `to/cc' combination illustrates the utility of role-preserving randomization -- while this measurement would be statistically significant under randomization without role information, it falls within the bulk of the null when roles are preserved. 
    We recall that the Spearman coefficient defined above subtracts out the edges along which $u$ and $v$ interact. 
    It is natural to conjecture that the result for the `to/cc' combination indicates a tendency for nodes to cluster in recurring `to/cc' motifs, which are then removed in the course of calculation. 
    An example of a recurring pattern might be frequent emails to an executive with their personal assistant cc'd. 
    
\subsubsection{Centralities}
    In this and subsequent sections, we study the weighted projected graph of the Enron annotated hypergraph. 
    We will primarily use the role interaction kernel given by 
    \begin{align}
        \mathbf{R} = \kbordermatrix{
            & \text{from} & \text{to} & \text{cc} \\ 
            \text{from} &0 & 1 & 0.25\\ 
            \text{to} &0 & 0 & 0\\ 
            \text{cc} &0 & 0 & 0
        }\;. \label{eq:info_kernel}
    \end{align}
    This kernel emphasizes flow along edges -- information flows strongly to receivers listed in the `to' field and less strongly to those listed in the `cc' field. 
    
    \Cref{fig:enron_centralities} illustrates the flexibility of interaction kernels in studying graph properties. 
    We compute eigenvector and PageRank centralities on the weighted projected graph using the kernels $\mathbf{R}$ and $\mathbf{R}^T$. 
    High-centrality nodes under  $\mathbf{R}$ will tend to be those to whom information flows, while under $\mathbf{R}^T$ they will tend to be those from whom information originates. 
    The kernel $\mathbf{R}$ thus emphasizes information \emph{sinks}, and $\mathbf{R}^T$ information \emph{sources}. In bulk, the source and sink centralities are only weakly correlated, indicating that they capture distinct structural properties of the network. 
    The flexibility of the formalism of annotated hypergraphs with user-specified role interaction kernels supports the discovery of these features. 
    
    \begin{figure}
        \centering
        \includegraphics[width=.7\linewidth]{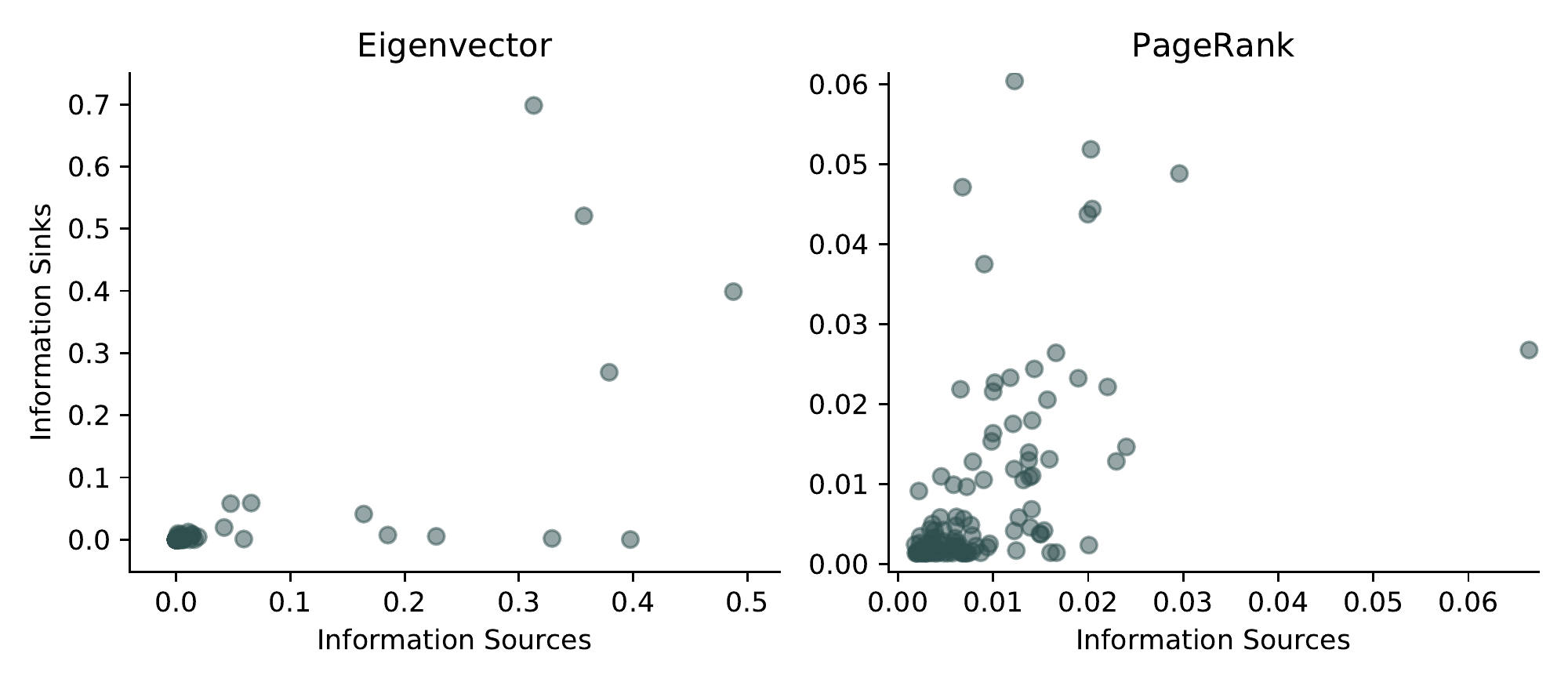}
        \caption{
            Sensitivity of eigenvector and PageRank centrality measures to role interaction kernels. 
            The horizontal axis gives centrality scores under the projection $\mathbf{R}^T$, while the vertical gives those under $\mathbf{R}$. 
            The PageRank teleportation parameter is $\alpha = 0.15$. 
        }\label{fig:enron_centralities}
    \end{figure}

\subsubsection{Modularity Maximization}

    \begin{figure}
        \centering
        \includegraphics[width=.49\textwidth]{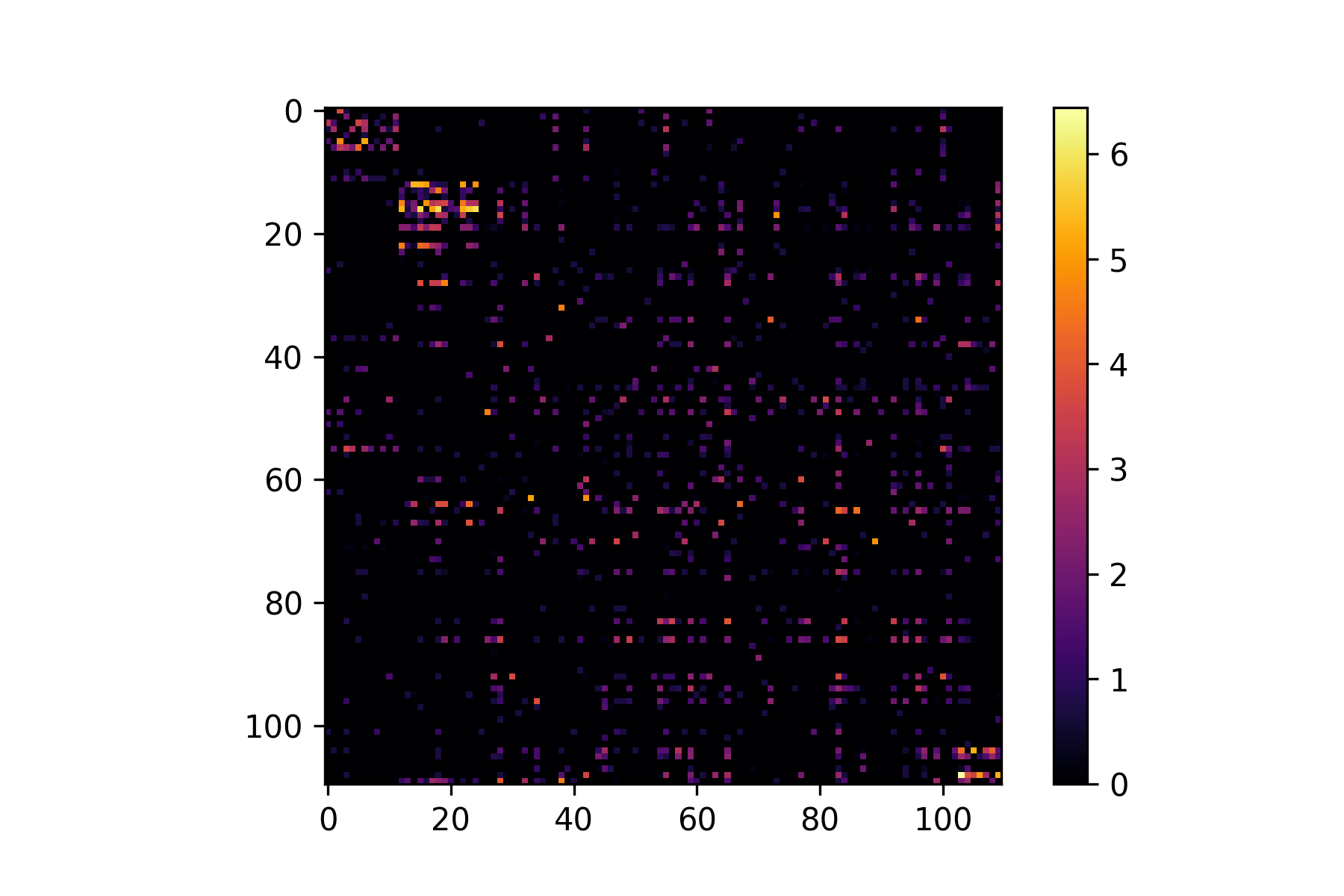}
        \includegraphics[width=.48\textwidth]{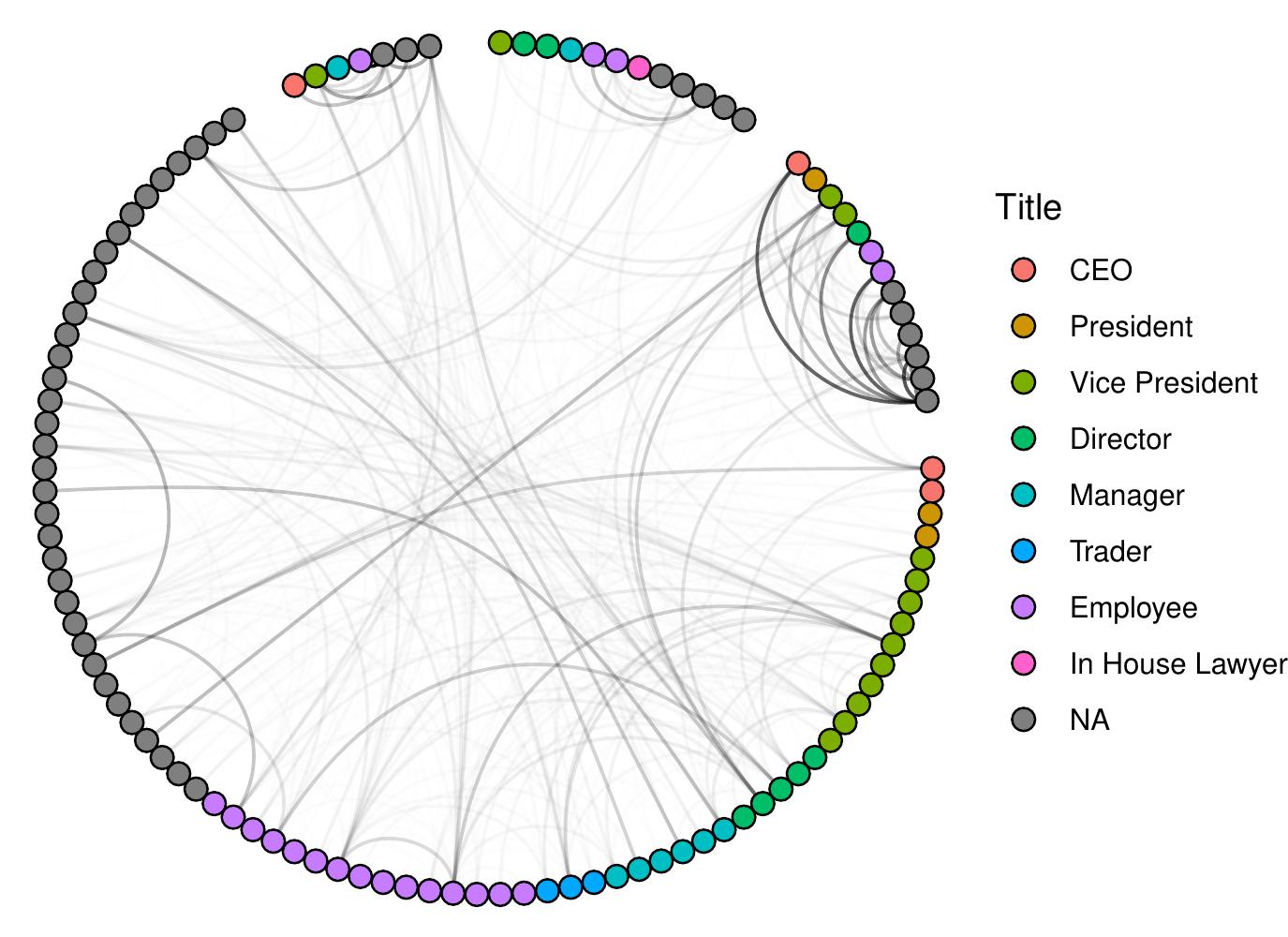}
        \caption{
        Example partition of the Enron data set via approximate spectral maximization of \Cref{eq:modularity}. 
        (Left): The adjacency matrix, arranged in order of the partition. 
        Colors are shown on a log-scale. 
        (Right): Visualization of the network. 
        Directed edges have been made undirected. 
        The darkness of each edge corresponds to its weight. 
        The color of each node corresponds to listed job title. 
        In this experiment, $k = 4$, and the kernel used is the information flow kernel $R$ of \Cref{eq:info_kernel}. 
        The modularity of the shown partition is 
        $0.506$. 
        }
        \label{fig:enron_communities}
    \end{figure}  
    
    \Cref{fig:enron_communities} shows the best of 100 partitions obtained using this kernel for $k = 4$ communities. 
    Inspection of the clustered adjacency matrix (left) suggests that three of the communities are relatively coherent, while one is highly disperse and essentially serves as a ``none of the above'' class. 
    On the right, we show the communities themselves, along with nodes colored according to job title. 

    It is useful to draw contrast against the uniform role interaction kernel, which ignores roles entirely: 
    \begin{align}
        \mathbf{R}' = \kbordermatrix{
            & \text{from} & \text{to} & \text{cc} \\ \text{from} &1 & 1 & 1\\ 
            \text{to} &1 & 1 & 1\\ 
            \text{cc} &1 & 1 & 1
        } \label{eq:uniform}
    \end{align}
    The use of this kernel produces substantially different communities when compared to the information flow kernel. 
    The best of 100 partitions obtained using this kernel only chose three communities, when up to $k = 4$ were available. 
    This partition achieves modularity $Q = 0.524$.
    The modularity scores of the two kernel weighting methods should not be compared directly, since they are computed on different modularity matrices. 

    A simple measure of similarity between the two partitions is the normalized mutual information 
    \begin{align*}
        NMI = 2\frac{I(X,Y)}{H(X) + H(Y)}\;,
    \end{align*}
    where $X$ and $Y$ are random variables giving the community assignment of a uniformly random node under each scheme, $H(X)$ is the entropy of random variable $X$, and $I(X,Y)$ the  mutual information of the two random variables $X$ and $Y$. 
    The NMI has maximum value of unity when $X$ and $Y$ are deterministically related, and minimum value of zero when they are statistically independent. 
    In this case, the normalized mutual information between the weighted and unweighted community assignments is 0.55. 
    This score indicates that the partitions are correlated, as we might expect -- however, there are nevertheless substantial differences between them. 
    These simple experiments emphasize the importance of appropriately-specified interaction kernels when performing community detection on annotated hypergraphs. 

\subsection{Ensemble Study}

    We now turn to studying annotated graph features systematically across a range of data sets.
    We consider six data sets, outlined in \Cref{tab:dataset_stats}, with full descriptions given in \Cref{app:data}. 
    We capture the wide variety of possible hypergraphs, those where the number of nodes exceeds that of the number of edges ($|V| \gg |E|$), and those where the number of nodes is much less than the number of edges ($|V| \ll |E|$).
    The number of node-edge stubs can take a maximum value of $|V||E|$ which corresponds to every node being included in every edge. 
    All the data are social in nature, but diverse in their interpretation. 
    With the exception of the Enron data, all are sparse, with low mean node degree $\bracket{d}$ and edge dimension $\bracket{k}$. 

    \begin{table}[tb]
        \setlength{\tabcolsep}{5pt}
        \centering
        \scalebox{1.0}{
        \begin{tabular}{l r r r r r }
        \toprule
                       & $\abs{\V}$     & $\abs{\E}$    & $\abs{\X}$    & $\bracket{d}$       & $\bracket{k}$  \ \\
        \midrule
        \texttt{enron}          & 112 & 10,504  & 3  & 230.6   & 2.5  \\
        \texttt{stack-overflow}     & 22,131 & 4,716  & 3 & 1.3   & 6.0  \\
        \texttt{math-overflow}     & 410 & 154 & 3 &  1.7 & 4.6  \\
        \texttt{scopus-multilayer}       & 1,677   & 938 & 3 & 1.7 & 3.1  \\
        \texttt{movielens}    & 73,155    & 43,058 & 2 & 2.8 & 4.7  \\ 
        \texttt{twitter}           & 52,294 & 123,158 & 4 & 5.1 & 2.2   \\
        \bottomrule
        \end{tabular}
        }
        \caption{Descriptive statistics for each data set. The fourth and fifth columns give the mean node degree and mean edge dimension, both ignoring role labels.}
        \label{tab:dataset_stats}
    \end{table}

    For each data set we calculate seven summary statistics.
    Two of these are the average entropy of the individual and local role densities (\Cref{eqn:individual_role_density,eqn:local_role_density} resp.), which capture the diversity of roles observed on individual nodes and their neighbourhoods.
    An entropy of zero indicates no role diversity observed, while maximal entropy\footnote{
        If the number of roles is $|\X|=p$ then the entropy has a maximum value of $\log_2 p$.
    }
    indicates all roles are equally likely to be observed.
    In addition we calculate the mutual information between a node's individual roles and local roles.
    This quantity is computed as the mean KL-divergence between the individual and local role densities. 
    We also calculate the weighted degree, PageRank, and eigenvector centralities for each node.
    To create suitable summary statistics we again use entropy, now across the distribution of centrality across nodes.
    This captured how concentrated the centrality is across all nodes in the hypergraph.
    Finally we report the number of weakly connected components in the weighted projection.

    We assess the significance of each feature by comparing with ensembles generated from both role-preserving and non-role-preserving randomised swaps.
    We take $500$ samples from each null model, each time performing $\lfloor 0.1|\mathcal{E}'| \rfloor$ shuffles, where $|\mathcal{E}'|$ is the number of edge stubs.
    We use a burn-in period of $10|\mathcal{E}'|$ shuffles to ensure that all chains are sufficiently well-mixed. 

    \begin{figure}
        \centering
        \includegraphics[width=\linewidth]{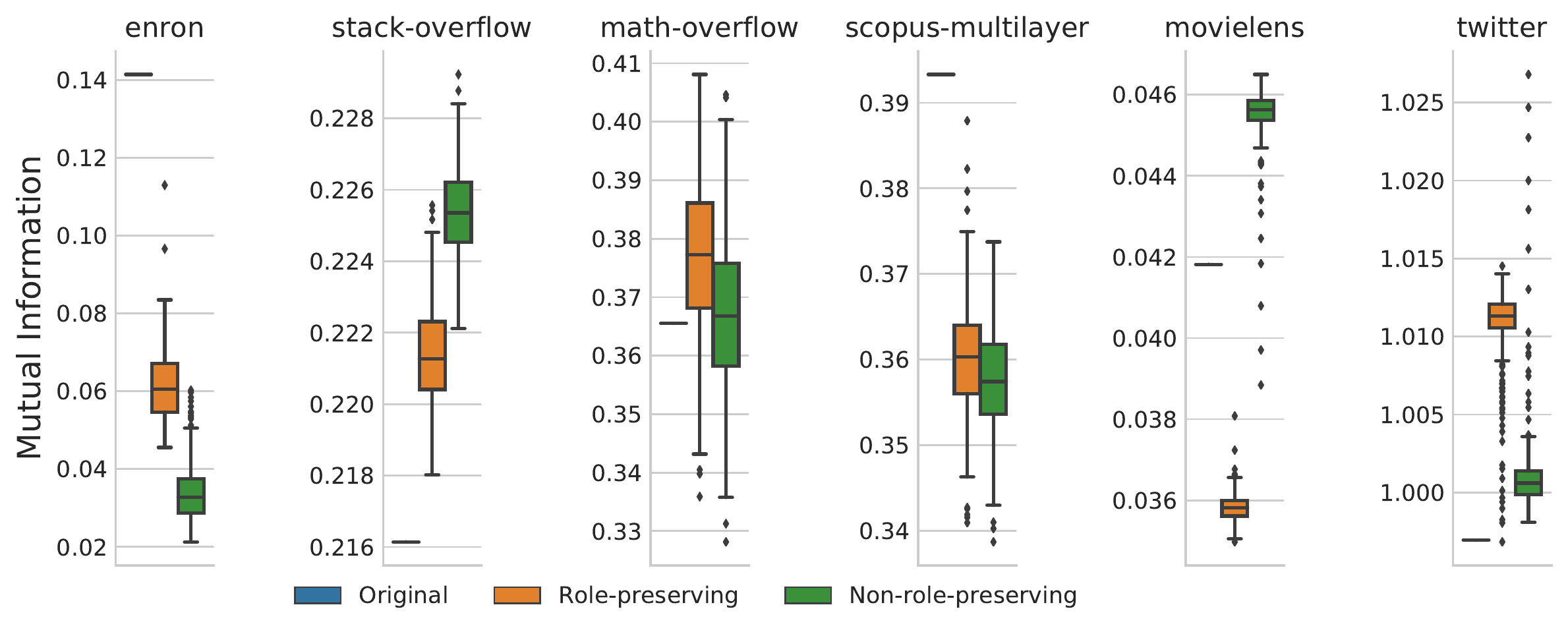}
        \caption{The average local role mutual information across data sets, with comparisons to the role-preserving (green) and non-role-preserving ensembles (green).
        In all cases, the observed value is falls outside the inter-quartile range for the role-preserving ensemble (orange), and in all but the \texttt{math-overflow} data set for the non-role-preserving ensemble (green).
        Note that in both the \texttt{movielens} and \texttt{twitter} data sets we may not have completely reached the mixed-state during the burn-in period, however in both cases the ensemble distributions differ significantly from the empirical observation.
        }
        \label{fig:ensemble_distributions_examples}
    \end{figure}

    In \Cref{fig:ensemble_distributions_examples} we show the local role mutual information across each data set.
    We see in all cases except for \texttt{math-overflow} that the local role mutual information is significant when compared to the non-role-preserving null model.
    This intuitively makes sense given that nodes may switch roles and so any correlation between node states is lost.
    In all data the local role density is significant when compared to the role-preserving model, however again the \texttt{math-overflow} shows the least difference (likely due to the small data size).
    Despite being significant in all but one data set, the local role mutual information can be both larger than expected (e.g. \texttt{enron, scopus-multilayer}) and small than expected (e.g. \texttt{stack-overflow, twitter}) when compared to the null.
    This can be explained by certain nodes having little diversity of roles in their local neighbourhood.
    For example, in the \texttt{enron} hypergraph, certain nodes may only ever send messages (and so their neighbourhood is considers of recipients) as we saw in \Cref{fig:enron_roles}(b).
    However, upon shuffling, these nodes may be swapped with other nodes with more diverse neighbourhoods, effectively increasing the average local diversity.
    Those which are lower than expected suggest that the hyperedges themselves are relatively uniform in participating roles. 
    In the \texttt{twitter} data this could be due to the limit on the edge cardinality (due to the character limit on posts).

    The significance of the remaining features for two data sets is given in \Cref{tab:ensemble_significance}.
    Naturally the node role entropy is never significant under the role-preserving null model, however in all cases it becomes significant when roles are not preserved upon shuffling.
    In the \texttt{enron} data set all features (barring the number of connected components) differ significantly from the non-role-preserving null.
    Interestingly the shuffling effect is not consistent across the different centrality measures.
    For the eigenvector centrality is more localised than expected (lower entropy) however the PageRank is less localised than expected (higher entropy). 
    Echoing \Cref{fig:enron_roles}(C), the neighbourhood role distributions are more diverse in both null models. 
    For the \texttt{stack-overflow} data, the significance of number of components can easily be explained by the presence of topic cliques in the original data.
    For example, users answering questions on \emph{Python} may be unlikely to answer questions on \emph{Javascript}.
    Since the null model is agnostic to these topics, the components are merged under shuffling.
    The role entropies are all significantly lower than expectation.
    This can be explained by nodes being consistent in the the roles that they take across multiple questions - that is, question answerers tend to answer more questions, for example.

    \begin{table}[]
        \setlength{\tabcolsep}{5pt}
        \centering
        \caption{
            The significance of multiple observables across the \texttt{enron} and \texttt{stack-overflow} data sets.
            Significance scores coloured red are two standard deviations larger in the original data than the null model.
            In contrast, those coloured blue are two standard deviations smaller.
            The results for all data sets are presented in \Cref{fig:ensemble_full}.
            }
        \scalebox{0.81}{
\begin{tabular}{llcccccc}
	                                         &                             &           & \multicolumn{2}{c}{Non-preserving}                             & \hspace{5pt}                       & \multicolumn{2}{c}{Preserving}                                        \\
	\cline{4-5}\cline{7-8}
	Data                                     & Feature                     & Orig. Val & Avg. Val.                      & $z$                       &                                    & Avg. Val. & $z$                      \\
	\midrule
	\multirow{7}{*}{\texttt{enron}}          & Connected Components        & 1.00      & 1.00                           & 0.00                      &                                    & 1.00      & 0.00                     \\
	                                         & Local Role MI               & 0.14      & 0.03                           & \textcolor{red}{15.68}    &                                    & 0.06      & \textcolor{red}{9.78}    \\
	                                         & Local Role Entropy         & 1.24      & 1.36                           & \textcolor{blue}{-12.47}  &                                    & 1.34      & \textcolor{blue}{-8.24}  \\
	                                         & Node Role Entropy           & 0.99      & 1.33                           & \textcolor{blue}{-19.55}  &                                    & 0.99      & 0.00                     \\
	                                         & Weight. Degree Entropy      & 5.87      & 5.83                           & \textcolor{red}{3.00}     &                                    & 5.87      & -0.39                    \\
	                                         & Weight. Eigenvector Entropy & 4.02      & 5.84                           & \textcolor{blue}{-154.46} &                                    & 5.88      & \textcolor{blue}{-55.09} \\
	                                         & Weight. Pagerank Entropy    & 6.36      & 6.11                           & \textcolor{red}{26.47}    &                                    & 6.14      & \textcolor{red}{13.74}   \\
	\midrule
	\multirow{7}{*}{\texttt{stack-overflow}} & Connected Components        & 1168.00   & 1166.71                        & 0.06                      &                                    & 1041.01   & \textcolor{red}{6.49}    \\
	                                         & Local Role MI               & 0.22      & 0.23                           & \textcolor{blue}{-7.69}   &                                    & 0.22      & \textcolor{blue}{-3.88}  \\
	                                         & Local Role Entropy         & 0.87      & 0.88                           & \textcolor{blue}{-4.73}   &                                    & 0.88      & \textcolor{blue}{-6.32}  \\
	                                         & Node Role Entropy           & 0.05      & 0.07                           & \textcolor{blue}{-19.78}  &                                    & 0.05      & 0.00                     \\
	                                         & Weight. Degree Entropy      & 13.89     & 13.81                          & \textcolor{red}{14.97}    &                                    & 13.84     & \textcolor{red}{8.71}    \\
	                                         & Weight. Eigenvector Entropy & 7.95      & 8.19                           & -0.38                     &                                    & 7.28      & 1.24                     \\
	                                         & Weight. Pagerank Entropy    & 14.28     & 14.23                          & \textcolor{red}{18.13}    &                                    & 14.24     & \textcolor{red}{13.88}   \\
	\bottomrule
\end{tabular}
                }
        \label{tab:ensemble_significance}
    \end{table}

    A full summary of the of the analysis across all features and datasets is given in \Cref{app:ensemble}.

\section{Discussion}
\label{sec:discussion}
    
    Many of the seminal results in network science were obtained using dyadic, unweighted networks without metadata -- perhaps the minimal model of a complex system. 
    As the quantity and richness of networked data have grown, so too has the need to incorporate higher-order interactions and heterogeneous node and edge properties into our models.
    Toward this program, we have introduced annotated hypergraphs. 
    Annotated hypergraphs may be viewed as a natural extension of directed graphs for modeling heterogeneous, polyadic interactions. 
    Because the setting of annotated hypergraphs is highly general, they allow the data scientist to flexibly incorporate their assumptions about how role information should feature into downstream analysis. 
    
    We have also made contributions to the inferential and exploratory analysis of annotated hypergraphs. 
    First, we formulated a role-aware configuration null model. 
    This model can be used to assess whether an observed structural feature in an annotated hypergraph can be explained purely through information about role-dependent edge and node incidence. 
    Features that cannot may be reasonably attributed to higher-order mechanisms, which may then be investigated.
    Second, we provide a small suite of analytical tools that generalize many common methods for studying dyadic networks, including centrality, assortativity, and modularity scores. 
    We show how each of these may be used in role-aware ways to highlight diverse features of the data. 

    Annotated hypergraphs admit multiple avenues of future work. 
    There are opportunities to define and study diffusion, spreading, and opinion dynamics on annotated hypergraphs, using models that account for heterogeneous, polyadic interactions.
    For example, our weighted projection scheme does not incorporate any explicit accounting of edge dimensions. 
    It may be of interest to define a role-dependent simple random walk along the hypergraph. 
    These structures implicitly normalize edge dimensions, implying that a node who received an email along with ten others is in some sense less important than one who received a private communication. 
    This walk could then be used to define alternative measures of centrality and modularity. 

    Another direction of future development concerns the structure of hyperedges. 
    By assigning roles within each hyperedge, we impose a certain model of how the interaction marked by the edge takes place. 
    Further structural assumptions are possible.
    In some cases it may be useful to assume, for example, that the hyperedge itself contains a small network between the nodes. 
    The role of the hyperedge in this case is to serve as a single entity housing a network motif \cite{alon2007network}. 
    A null model over such structures would allow for the sampling of random graphs with control over the participation of nodes in various microscale graph structures. 
    While such a model would be substantially more complex, the emerging importance of network motifs \cite{benson2016higher} and network-of-networks modeling (e.g. \cite{kenett2015networks}) may provide sufficient impetus to pursue it. 
    
    Finally, an important feature of many polyadic data sets is that interactions are temporally localized. 
    The incorporation of temporal information into models of hypergraphs and annotated hypergraphs is of substantial importance for modeling realistic dynamics on network substrates. 
    One route may be to generalize temporal event graphs \cite{mellor2018event} for rich, polyadic data. 
    Such a generalization, along with the development of associated metrics, would be of substantial theoretical and practical interest. 

    This work is a contribution to the project of integrating progressively more complex information into network data science. 
    We foresee that this program will become increasingly important as rich, relational data sets become more readily available.

\section*{Declarations}

\subsubsection*{Availability of data and materials}
    The datasets generated and analysed during the current study are  \href{https://andrewmellor.co.uk/data/}{available from the authors webpage}.
	A Python package for the analysis of annotated hypergraphs is \href{https://github.com/PhilChodrow/annotated_hypergraphs}{available on Github}.
    This includes the analysis to replicate the figures from this study.
    
\subsubsection*{Competing interests}
The authors declare that they have no competing interests.

\subsubsection*{Funding}
PC is supported by the US National Science Foundation Graduate Research Fellowship under grant number 1122374. 
AM is funded by the Oxford-Emirates Data Science Laboratory.

\subsubsection*{Author contributions}
PC and AM contributed equally to the conception of the research, the data analysis, the technical analysis, and the writing of the manuscript.

\subsubsection*{Acknowledgments}
    
    The idea for this research was born at the 2019 SIAM Workshop on Network Science, organized by Nina Fefferman and Peter Mucha. 
    PC also thanks the University of Oxford for financial support of a brief collaborative visit.  

\bibliographystyle{abbrv}
\bibliography{references}

\appendix

\section{Data and Software}
\label{app:data}

\subsection{Data Sets and Choice of Role-Interaction Matrix}

\subsubsection{Enron Email}

	This consists of the core Enron emails from the archived Enron email database \cite{klimt2004introducing}.
	Core email addresses are those with a valid Enron address.
	All other emails have been omitted.
	Here the node roles are \emph{from}, \emph{to}, and \emph{cc} which capture the various fields in a typical email header (in this case bcc has been merged with cc).
	Note that a node may appear in an edge twice under multiple roles, for example sending a message to oneself.
    
    For this hypergraph we choose the role-interaction matrix to be
	\begin{align*}
        \mathbf{R} = \kbordermatrix{
            & \text{from} & \text{to} & \text{cc} \\ 
            \text{from} &0 & 1 & 0.25\\ 
            \text{to} &0 & 0 & 0\\ 
            \text{cc} &0 & 0 & 0
        }.
    \end{align*}
    
    This reflects that information can only be transmitted from the sender.
    Furthermore this reflects the assumption that information is less likely to be transmitted (or not fully transmitted) to those who are ``cc'd.''

\subsubsection{Scopus Multilayer Literature}

	This data consists of academic literature surrounding `multilayer networks,' collected using Scopus.
	Specifically we choose all references from the following three reviews and books:
	\begin{enumerate}
	    \item Kivelä et al. "Multilayer networks." Journal of Complex Networks 2.3 (2014): 203-271.
	    \item Bianconi. Multilayer networks: structure and function. Oxford University Press, 2018.
	    \item Boccaletti et al. "The structure and dynamics of multilayer networks." Physics Reports 544.1 (2014): 1-122.
	\end{enumerate}
	Authors are assigned roles for each article dependent on their order in the list of authors.
	Although practices vary across disciplines and institutions, here we distinguish between first, middle, and last authors.
	When there are fewer than three authors then the role is assigned as first for a single author, and first and last for a pair of authors (regardless of any note of equal contribution).

    For this hypergraph we choose the role-interaction matrix to be
	\begin{align*}
        \mathbf{R} = \kbordermatrix{
            & \text{first} & \text{middle} & \text{last} \\ 
            \text{first} &0 & 1 & 0.5\\ 
            \text{middle} &0.2 & 0.2 & 0.2\\ 
            \text{last} &1 & 0.25 & 0
        }.
    \end{align*}
    We make the modelling assumption that the first author is the most knowledgeable and therefore able to spread information to other authors.
    The last author is often an advisor who can spread information to the first author, while the middle authors can weakly diseminate information between everyone.

\subsubsection{MovieLens Actor Credits}

	This data contains a list of credits from the MovieLens data collection.
	We consider the top five billed actors from a collection of [] movies.
	Actor roles are distinguished by being the top-billed actor, or in the remaining cast. 

    For this hypergraph we choose the role-interaction matrix to be
	\begin{align*}
        \mathbf{R} = \kbordermatrix{
            & \text{top} & \text{rest}  \\ 
            \text{top} &0 & 1 \\ 
            \text{rest} &0.25 & 0.25 \\ 
        }.
    \end{align*}
    The top billed actor is assumed to be the most diffusive, potentially spreading fame and influence.
    The lower billed cast have a smaller diffusive rate.

\subsubsection{Stack Overflow Threads}

	This data contains a list of Stack Overflow question threads which achieved a score greater than 25 between 1st January 2017 to 1st January 2019.
	The score is calculated and reflects the quality of the question both in terms of its pertinence and its presentation.
	Here edges reflect questions threads where users can be in three roles.
	These are the question setter, the question answerers, and the best answerer (chosen by the question setter as the accepted answer).
    
    For this hypergraph we choose the role-interaction matrix to be    
	\begin{align*}
        \mathbf{R} = \kbordermatrix{
            & \text{setter} & \text{answerer} & \text{accepted} \\ 
            \text{setter} &0 & 0.1 & 0.1\\ 
            \text{answerer} &0.3 & 0.3 & 0.3\\ 
            \text{accepted} &1 & 0.5 & 0
        }.
    \end{align*}
    Here, the question setter may disseminate some information (either about the question, or the topic).
    The question answerers may share information uniformly across all roles.
    The node whose question is answered transfers the most information to the question setter since the setter has chosen this response to adopt. 
    This accepted answer may also benefit the other answerers with less useful answers.

\subsubsection{Math Overflow Threads}

	This data is in the identical format to the Stack Overflow threads, however questions threads are now on the topic of mathematical research as opposed to general programming.

\subsubsection{Twitter Keyword Sample}

    This data contains a list of messages posted on the social media platform Twitter over a 24 hour period.
    All these messages contained a particular keyword relating to the aviation industry.
    Each edge corresponds to a message (or \emph{tweet}).
    Nodes participating in a message can have four possible roles, a sender, a receiver, a retweeter, and the retweeted\footnote{See \href{https://help.twitter.com/en/twitter-guide}{https://help.twitter.com/en/twitter-guide} for details of each role.}.
    
    For this hypergraph we choose the role-interaction matrix to be
	\begin{align*}
        \mathbf{R} = \kbordermatrix{
            & \text{sender} & \text{receiver} & \text{retweeter} & \text{retweeted} \\ 
            \text{sender} &0 & 1 & 0 & 0\\ 
            \text{receiver} &0 & 0 & 0 & 0\\ 
            \text{retweeter} &0 & 0.25 & 0 & 0\\
            \text{retweeted} &0 & 0.25 & 1 & 0\\          
        }.
    \end{align*}
    We assume that the sender transmits information to the receivers in a directed fashion (information travelling only one way).
    A retweeted node can transmit information to the node that retweets it and so to any receivers who are also included in the message.

\section{Ensemble Study}
\label{app:ensemble}

In \Cref{fig:ensemble_full} we present the all results from the ensemble study.

\begin{figure}
    \centering
    \includegraphics[height=6in]{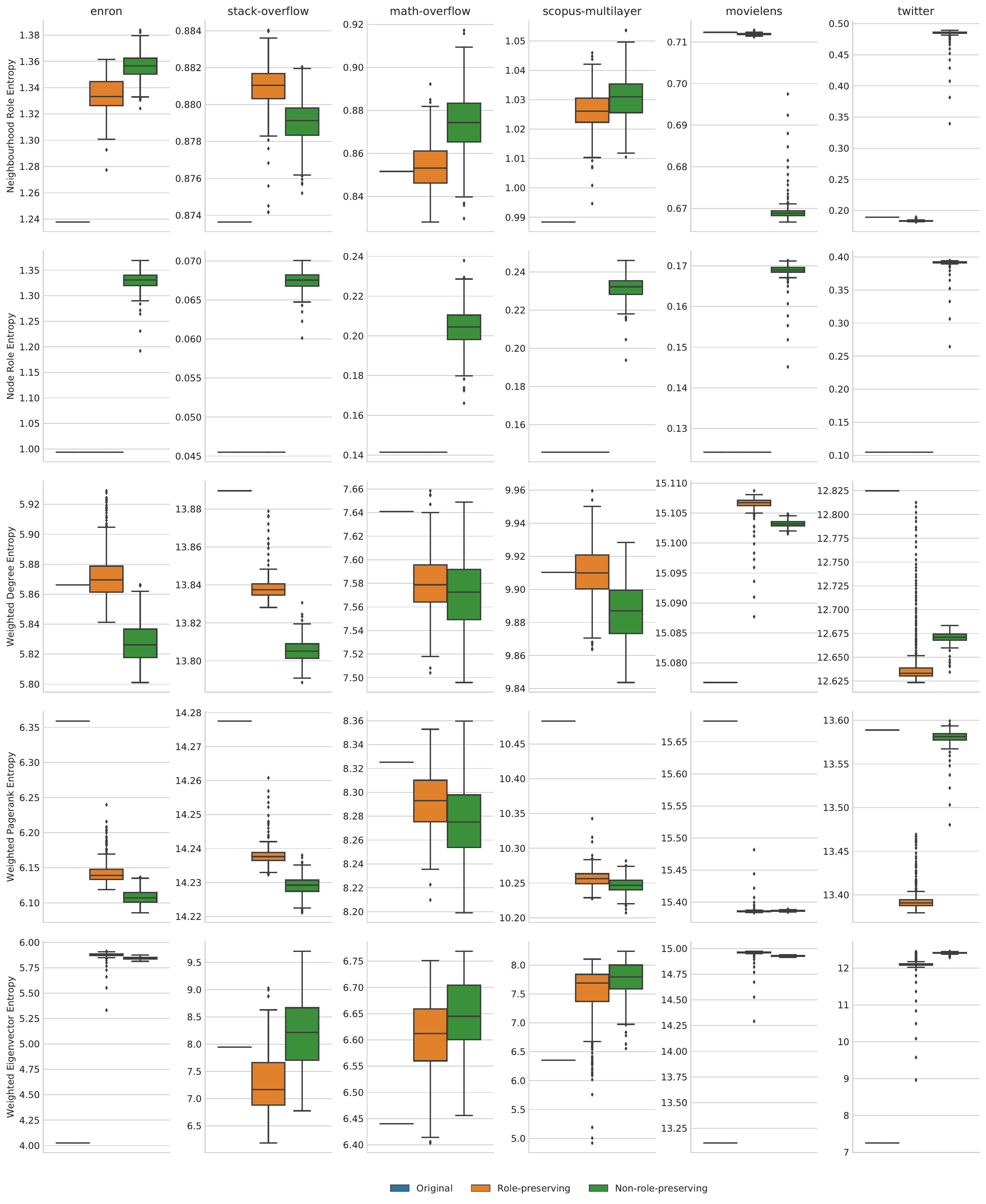}
    \caption{Full results.}
    \label{fig:ensemble_full}
\end{figure}

\end{document}